\newcommand{\qedsymb}{\qed}
\newenvironment{proofof}[1]{\begin{trivlist}
		\item[\hspace{\labelsep}{\bf\noindent Proof of #1: }]
	}{\qedsymb\end{trivlist}}
\newtheorem{theorem}{Theorem}
\newtheorem{lemma}{Lemma}
\newtheorem{definition}{Definition}
\newtheorem{claim}{Claim}
\newtheorem{corollary}{Corollary}
\newtheorem{observation}{Observation}
\newtheorem{oq}{Open Question}
\newcommand{\clique}{\textsc{Congested-Clique}\xspace}
\newcommand{\cliquebcast}{\textsc{Clique-Broadcast}\xspace}
\newcommand{\congestSTD}{\textsc{Congest}\xspace}
\newcommand{\congest}[1]{\textsc{Congest\ensuremath{(#1)}}\xspace}
\newcommand{\local}{\textsc{Local}\xspace}
\newcommand\floor[1]{\lfloor#1\rfloor}
\newcommand{\set}[1]{\left\{ #1 \right\}}
\DeclareMathOperator{\false}{{\scriptstyle{FALSE}}}
\DeclareMathOperator{\true}{{\scriptstyle{TRUE}}}
\newenvironment{theorem-repeat}[1]{\begin{trivlist}
		\item[\hspace{\labelsep}{\bf\noindent Theorem \ref{#1} }]\em }%
	{\end{trivlist}}
\newenvironment{lemma-repeat}[1]{\begin{trivlist}
		\item[\hspace{\labelsep}{\bf\noindent Lemma \ref{#1} }]\em }%
	{\end{trivlist}}
\begin{document}
	\begin{titlepage}
		\title{Fooling Views: A New Lower Bound Technique for Distributed Computations under Congestion}
		\author{Amir Abboud\footnote{IBM Almaden Research Center. \texttt{amir.abboud@ibm.com}.} \and Keren Censor-Hillel\footnote{Technion, Department of Computer Science, \texttt{\{ckeren,serikhoury\}@cs.technion.ac.il}. Supported in part by the Israel Science Foundation (grant 1696/14).}
			\and Seri Khoury\footnotemark[2]{} \and Christoph Lenzen\footnote{MPI for Informatics, Saarland Informatics Campus, \texttt{clenzen@mpi-inf.mpg.de}.}
		}
		\maketitle
		\begin{abstract}
			We introduce a novel lower bound technique for distributed graph algorithms under bandwidth limitations.
			We define the notion of \emph{fooling views} and exemplify its strength by proving two new lower bounds for triangle membership in the \congest{B} model:
			\begin{enumerate}
				\item Any $1$-round algorithm requires $B\geq c\Delta \log n$ for a constant $c>0$.
				\item If $B=1$, even in constant-degree graphs any algorithm must take $\Omega(\log^* n)$ rounds.
			\end{enumerate}
			The implication of the former is the first proven separation between the \local and the \congestSTD models for deterministic triangle membership.
			The latter result is the first non-trivial lower bound on the number of rounds required, even for \emph{triangle detection}, under limited bandwidth.
			All previous known techniques are provably incapable of giving these bounds.
			We hope that our approach may pave the way for proving lower bounds for additional problems in various settings of distributed computing for which previous techniques do not suffice.
		\end{abstract}
		\thispagestyle{empty}
	\end{titlepage}
	
	\section{Introduction}	
	
	In a group of $n$ players with names of $O(\log n)$ bits, how many days does it take for someone to find out if there are three players that are all friends with each other?
	All computation is free, and in the beginning, all players know their friends. Then, on each day, each player can send $B=O(\log{n})$ bits privately to each of its friends.
	
	This multi-party communication puzzle is known as the triangle detection problem in the popular \congestSTD model of distributed computing. Its complexity is poorly understood.
	In the naive protocol, each player (node) tells all its friends about all its other friends (sends its neighborhood to every neighbor).
	This takes a single round in the \local model and $O(n)$ rounds in \congestSTD.
	A clever randomized protocol of Izumi and Le Gall~\cite{IzumiG17} provides a solution with $O(n^{2/3}(\log{n})^{2/3})$ rounds in \congestSTD, and this is essentially all we know about the problem.
	Before our work, it could not be ruled out that the problem can be solved in $O(1)$ rounds, or even a single round, even with a bandwidth of $B=1$!
	
	\begin{oq}
		\label{oq1}
		What is the round complexity of triangle detection in the \congestSTD model of distributed computing?
	\end{oq}
	
	Triangle detection is an extensively studied problem in most models of computation.
	In the centralized setting, the best known algorithm involves taking the cube of the adjacency matrix of the graph. It runs in $O(n^{2.3729})$ time and was found using a complex computer program \cite{Vass12,LeGall14fmm}.
	If one wishes to avoid the impractical matrix multiplication, the problem can be solved in $O(n^3/\log^{4} n)$ time \cite{BW12,Chan15,Yu15}.
	Other works have designed algorithms for sparse graphs \cite{AYZ95}, for real-world graphs (e.g. \cite{SW05}), for listing all the triangles \cite{IR78,BPVZ14}, for approximately counting their number \cite{KMPT12}, for weighted variants (e.g. \cite{CL09}), and much more (an exhaustive list is infeasible).
	Moreover, conjectures about the time complexity of triangle detection and of its variants \cite{VW10,AV14,AVY15,HenzingerKNS15,ABV15b} are among the cornerstones of fine-grained complexity (see \cite{Vass15}).
	Other highly non-trivial algorithms were designed for it in settings such as: distributed models \cite{DLP12,CKK+15,DruckerKO13,KMRS15,CFSV16,FRST16,IzumiG17}, quantum computing \cite{Buhrman01,Szegedy03,MSS07,Belvos12,LMS13,LeGall14,LN17a,LN17b}, and Map-Reduce \cite{SASU13,AFU13}.
	It is truly remarkable that such a basic problem has lead to so much research.
	
	From a technical perspective, Open Question~\ref{oq1} is one of the best illustrations of the lack and necessity of new techniques for proving lower bounds in distributed computing.
	In this work, we present a novel lower bound technique, providing a separation between the \local and \congest{B} models for a problem for which previous techniques are \emph{provably incapable} of doing so.\footnote{\congest{B} stands for the synchronous model with a bandwidth of $B$ bits. In particular, the standard \local and \congestSTD models correspond to \congest{\infty} and \congest{\log{n}}, respectively.} This is the first progress towards answering Open Question~\ref{oq1} from the lower bounds side.
	
	\subsection{Prior lower bound techniques and their limits}
	
	To date, there are essentially two techniques for deriving lower bounds for distributed graph algorithms.
	The first is the \emph{indistinguishability} technique of Linial~\cite{Linial92}, which is the main source for lower bounds in the \local model, where the message size is unrestricted.
	This technique argues that any $r$-round algorithm, regardless of message size, can be seen as a function that maps the $r$-hop neighborhood of a node to its output.
	Here, the topology of the graph is labeled by the unique $O(\log n)$-bit node identifiers and any other input provided to the nodes;
	for randomized algorithms, we simply give each node an infinite string of unbiased random bits as part of the input.
	
	This technique has resulted in a large number of \emph{locality} lower bounds, e.g.~\cite{Brandt16,KuhnMW16,Linial87,Linial92,Naor91}.
	For these problems, it is a long-standing open question whether higher lower bounds can be found in the \congest{B} model, see e.g.~\cite{Pai+17}.
	Note that part of its appeal is that one can entirely ``forget'' about the algorithm:
	for instance, a $2$-round coloring algorithm is just interpreted as a function assigning a color to each possible $2$-neighborhood, and a correct algorithm must assign distinct colors to any pair of neighborhoods that may belong to adjacent nodes in \emph{any} feasible input graph;
	more generally, this gives rise to the so-called \emph{$r$-neighborhood graph,} and showing that $r$ rounds are insufficient for coloring with $c$ colors equates to showing that the chromatic number of the $r$-neighborhood graph is larger than~$c$.
	
	Unfortunately, as this technique does not take bandwidth restrictions into account, it cannot show any separation between the \local and \congest{B} models.
	Triangle detection is possibly one of the most extreme examples for this, as it can be solved in a single round in \local but seems to require $n^{\Omega(1)}$ rounds in \congest{\log n}.
	Additional examples are symmetry breaking problems, such as Maximal Independent Set (MIS) and $(\Delta+1)$-Coloring, upon which we elaborate in Section~\ref{sec:discussion}.

	The second tool available for generating distributed lower bounds is the \emph{information bottleneck} technique, first introduced implicitly by Peleg and Rubinovich \cite{PelegR00}.
	Here, the idea is to reduce a $2$-party communication complexity problem (typically set disjointness) to a distributed problem, and argue that a fast distributed algorithm under limited bandwidth would imply a protocol exchanging few bits.
	This approach yields a large number of strong lower bounds for a wide range of \emph{global} problems, in which there is no bound on the distance up to which a local change in the input may affect the output. Examples are, e.g.,~\cite{SarmaHKKNPPW12,FrischknechtHW12,Censor-HillelKPY16,NanongkaiSP11,LenzenP14}, but a complete list would justify an entire survey by its own.

	Lower bounds based on information bottlenecks can also be proven for local problems.
	For instance, Drucker et al.\ show an $n^{\Omega(1)}$ lower bound for detecting $k$-cliques or $k$-cycles for any fixed $k>3$~\cite{DruckerKO13}.
	However, this technique is inherently incapable of proving lower bounds for many problems.
	In particular, it completely fails for the problem of triangle detection.
	As discussed by Drucker et al., this is because no matter how we divide the nodes of the graph among the two players, one of them will know about the triangle.
	One may, in principle, hope for lower bounds based on multi-party communication complexity or information complexity, but to date no such result is known.
	
	Intuitively, a lower bound for triangle detection must combine the two techniques. We have to argue that when a small number of bits is sent, the nodes do not have enough information to distinguish between a distance-2 neighborhood in which there is a triangle and one in which there isn't. Interestingly, Drucker et al.\ prove that no such technique is possible without breakthroughs in circuit complexity in the related \clique model~\cite{DruckerKO13}, where nodes can send messages to all other nodes, not only neighbors.\footnote{This result applies to small output, e.g.\ the triangle detection problem, which is a decision problem. For \emph{listing} all triangles in the graph, a tight bound of $\Omega(n^{1/3}/\log n)$ holds~\cite{IzumiG17,DLP12}.} Indeed, it is still open whether triangle detection can be solved in $n^{o(1)}$ rounds even in this powerful model (it is likely that it is \cite{DruckerKO13}), but now we know that the lack of super-constant lower bounds can be blamed on our inability to prove ``computational hardness'' results in CS: in a similar vein, we do not know whether 3SAT can be solved in polynomial or even linear time.
	In contrast, no such barrier is known in the standard \congest{B} model, where communication is limited to the input graph.
	This is an embarrassing situation, as we have a huge gap, but no well-known barrier to blame it on.
	
	With a clever usage of Rusza-Szemerdi graphs, Drucker et al.\ were also able to prove a strong $n^{1-o(1)}$ lower bound for triangle detection~\cite{DruckerKO13}, under the restriction that each node sends the same message to all of its neighbors in each round (in a broadcast fashion). More specifically, they show that a deterministic protocol in the \cliquebcast model (and therefore also Congest-Broadcast) requires $\Omega(n/e^{\sqrt{\log{n}}})$ rounds, and that (essentially) the same lower bound holds for randomized protocols under the Strong Exponential Time Hypothesis, a popular conjecture about the time complexity of $k$-SAT.
	Unfortunately, even under such conjectures, we do not know how to get any non-trivial lower bound in the standard \congest{B} model.
	
	Finally, it is worth pointing out a subtlety about the statement that $2$-player communication complexity cannot provide \emph{any} lower bound for triangle detection.
	This statement is fully accurate only under the assumption that nodes initially know the identifiers of their neighbors, as this renders it trivial to infer from the joint view of two neighbors whether they participate in a triangle.
	This assumption is known as $KT_1$, where $KT_i$ means \emph{Knowledge of Topology} up to distance $i$ (excluding edges with both endpoints in distance $i$), as was first defined in~\cite{AwerbuchGPV90}.
	The difference between $KT_0$, in which a node knows only its own identifier, and $KT_1$, in which a node knows also the identifiers of its neighbors, has been a focus of abundant studies, in particular concerning the \emph{message complexity} of distributed algorithms (see, e.g.,~\cite{AwerbuchGPV90,PaiPPR17,KingKT15,Elkin17,Pandurangan17} and references therein).
	
	Note that acquiring knowledge on the neighbors' identifiers requires no more than sending $O(\log n)$ bits over each edge, so the distinction between $KT_0$ and $KT_1$ is insubstantial for the round complexity in \congest{B} for $B\in \Omega(\log n)$;
	$KT_1$ is therefore the default assumption throughout wide parts of the literature.
	However, in $KT_0$ a lower bound of $\Omega(\tfrac{\log n}{B})$ on the round complexity of triangle detection follows from a simple counting argument.
	As, ultimately, the goal is to show lower bounds of $\omega(\log n)$, we consider $KT_1$ in this work.
	
	\subsection{Our contribution}
	
	In this paper, we introduce \emph{fooling views}, a technique for proving lower bounds for distributed algorithms with congestion.
	We are able to show the first non-trivial round complexity lower bounds on triangle detection in $KT_1$, separating the \local and \congest{B} models:
	\begin{enumerate}
		\item Triangle membership\footnote{In the triangle membership problem, every node must indicate whether it participates in a triangle.} in one round requires $B\geq c\Delta \log n$ for a constant $c>0$ (Section~\ref{sec:bandwidth}).
		\item If $B=1$, triangle detection requires $\Omega(\log^* n)$ rounds, even if $\Delta=2$, and even if the size of the network is constant and $n$ is the size of the namespace (Section~\ref{sec:OneBit}).
	\end{enumerate}
	We stress that we do not view our main contribution as the bounds themselves:
	while the bandwidth lower bound for single-round algorithms is tight, it hardly comes as a surprise that such algorithms need to communicate the entire neighborhood. Additionally, we do not believe that with $1$-bit messages, extremely fast triangle detection is possible.
	Rather, we present a novel \emph{technique} that enables to separate the two models, which is infeasible with prior lower bound techniques.
	We hope this to be a crucial step towards resolving the large gap between lower and upper bounds, which in contrast to other models is not justified by, e.g., conditional hardness results.

	The basic idea of fooling views is that they combine reasoning about locality with bandwidth restrictions.
	Framing this in terms of neighborhood graphs, this would mean to label neighborhoods by the information nodes have initially and the communication the algorithm performs over the edges incident to a node.
	However, this communication depends on the algorithm and the communication received in earlier rounds, enforcing more challenging inductive reasoning to prove multi-round lower bounds.
	
	To capture the intuition for our technique for $B=1$, think about a node that receives the same messages from its neighbors regardless of whether it participates in a triangle or not as a \emph{fooled node}.
	Intuitively, in a triangle $\{u,w,v\}$ of a given network, if one of the nodes $u,v$ and $w$ is able to detect the triangle after $t$ rounds of communication, then it may simply inform the other two nodes about the triangle during round number $t+1$.
	Thus, it is crucial to maintain a perpetual state of confusion for all nodes involved in the triangle.
	
	However, if the task is to detect whether a specific triple of IDs is connected by a triangle or not, then the nodes can solve this by simply exchanging only one bit of communication.
	Accordingly, our goal is to keep a large subset of the \emph{namespace} fooled as long as possible.
	To this end, think of a triangle $\{u,v,w\}$ for which none of $u,v$ and $w$ is able to detect the triangle as a \emph{fooled triangle}.
	Our main idea is to show that if there are many fooled triangles after $t$ rounds, then there are many triangles among them that are fooled after $t+1$ rounds as well.
	In order to express this intuition, one of our ingredients in the proof is the following extremal combinatorics result by Paul Erd\"os~\cite{Erdos1964}.
	\begin{theorem}[\cite{Erdos1964}, Theorem 1]
		\label{thm: large hypergraphs1}
		Any k-uniform hypergraph of n nodes which contains at least $n^{k-\ell^{1-k}}$ edges, must contain a complete $k$-partite $k$-uniform hypergraph such that each part of it is of size $\ell$.
	\end{theorem}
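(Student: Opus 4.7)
The plan is to prove the result by induction on $k$, with base case $k=2$ supplied by the K\H{o}v\'ari--S\'os--Tur\'an theorem (any $n$-vertex graph with $\Omega(n^{2-1/\ell})$ edges contains $K_{\ell,\ell}$). Fix $k \geq 3$ and assume the claim for $(k-1)$-uniform hypergraphs. Let $H = (V, E)$ with $|V| = n$ and $|E| \geq n^{k - \ell^{1-k}}$. For each $(k-1)$-subset $S \subseteq V$, define its link $N(S) := \{v \in V \setminus S : S \cup \{v\} \in E\}$ and its degree $d(S) := |N(S)|$. Since each edge of $H$ contains exactly $k$ such subsets, $\sum_S d(S) = k|E|$, so the average link size is $\bar{d} := k|E|/\binom{n}{k-1} = \Omega(n^{1 - \ell^{1-k}})$.

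The core of the argument is a double counting of pairs $(S, T)$ where $S$ is a $(k-1)$-subset and $T$ is an $\ell$-subset of $N(S)$. From the $S$-side, Jensen's inequality applied to the convex map $x \mapsto \binom{x}{\ell}$ gives
\[
\sum_{S} \binom{d(S)}{\ell} \ \geq \ \binom{n}{k-1} \binom{\bar{d}}{\ell} \ = \ \Omega\!\left( n^{k-1 + \ell - \ell^{2-k}} \right).
\]
From the $T$-side, the same count equals $\sum_{T} f(T)$, where $f(T) := |\{S : T \subseteq N(S)\}|$ counts the $(k-1)$-subsets whose union with every vertex of $T$ is an edge of $H$. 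Averaging over the $O(n^{\ell})$ choices of $T$, some $\ell$-set $T^{*} = \{v_1, \ldots, v_\ell\}$ achieves
\[
f(T^{*}) \ \geq \ n^{(k-1) - \ell^{2-k}} \ = \ n^{(k-1) - \ell^{1-(k-1)}}.
\]

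Finally, I would invoke the inductive hypothesis on the link hypergraph $H_{T^{*}}$, the $(k-1)$-uniform hypergraph on $V \setminus T^{*}$ whose edges are exactly the $(k-1)$-subsets $S$ with $T^{*} \subseteq N(S)$. Since $|E(H_{T^{*}})| = f(T^{*})$ meets the inductive threshold, $H_{T^{*}}$ contains pairwise disjoint $\ell$-sets $V_1, \ldots, V_{k-1}$ such that every transversal $\{u_1, \ldots, u_{k-1}\}$ with $u_i \in V_i$ is an edge of $H_{T^{*}}$. By construction, appending any $v_j \in T^{*}$ to such a transversal produces an edge of $H$, so $V_1, \ldots, V_{k-1}, T^{*}$ form the desired complete $k$-partite $k$-uniform subhypergraph with all parts of size $\ell$.

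The main obstacle is tracking the multiplicative constants hidden in the two $\Omega$'s above: the Jensen step, the binomial approximations $\binom{n}{k-1} \sim n^{k-1}/(k-1)!$ and $\binom{n}{\ell} \sim n^{\ell}/\ell!$, and the substitution $\binom{\bar{d}}{\ell} \geq \bar{d}^{\ell}/\ell!$ each contribute factors depending on $k$ and $\ell$. These do not change the exponent in the conclusion but must be absorbed either by taking $n \geq n_0(k,\ell)$ or by tolerating a constant factor $C_{k,\ell}$ in the hypothesis. A slicker presentation counting ordered $(k-1)$-tuples of distinct vertices and working with falling factorials keeps the exponents identical while avoiding most of the binomial bookkeeping.
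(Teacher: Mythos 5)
The paper does not prove this statement; it is quoted verbatim from Erd\H{o}s's 1964 paper, so there is no in-paper proof to compare against. Your argument is essentially Erd\H{o}s's original one: induction on $k$ via links, the double count of pairs $(S,T)$ with $T\subseteq N(S)$, convexity to lower-bound $\sum_S\binom{d(S)}{\ell}$, and pigeonhole to extract a popular $\ell$-set $T^*$ whose $(k-1)$-uniform link hypergraph meets the inductive threshold (after which the complete $(k-1)$-partite structure in the link extends to a $k$-partite one by adjoining $T^*$). The exponent bookkeeping $(k-1)-\ell^{2-k}=(k-1)-\ell^{1-(k-1)}$ is right, and so is the structural step at the end.

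Two points deserve more care than the sketch gives them. First, the convexity of $x\mapsto\binom{x}{\ell}$ only holds on $[\ell-1,\infty)$, so to invoke Jensen you must replace $\binom{x}{\ell}$ by the convex extension that is $0$ below $\ell-1$ (which agrees with $\binom{d}{\ell}$ at all nonnegative integers $d$); this is standard but should be said. Second, and more substantively, you correctly flag that constants accumulate, but the cure you offer (``take $n\ge n_0(k,\ell)$ or tolerate a constant $C_{k,\ell}$'') must be applied \emph{consistently through the induction}: the inductive hypothesis is invoked on the link hypergraph, which has $n-\ell<n$ vertices, so you need your lower bound $f(T^*)\ge c_{k,\ell}\,n^{(k-1)-\ell^{2-k}}$ to dominate $(n-\ell)^{(k-1)-\ell^{2-k}}$, which is \emph{not} automatic when $c_{k,\ell}<1$ and $n$ is large (the ratio tends to $1$). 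The clean fix, which you allude to, is to state and prove the lemma in ordered-tuple form (falling factorials): then the average link size carries an extra factor of $k!$ which provides exactly the needed slack, and the induction closes without any constant leaking. In short: right approach, right exponents, and the caveats you note are genuine and need the ordered-tuple normalization to be resolved, not merely waved at.
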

	Using this theorem of Erd\"os, we are able to show that if there are many fooled triangles after $t$ rounds, then there is a set of nodes such that each triple in the set is a fooled triangle after $t+1$ rounds.
	Blending counting and indistinguishability arguments with this theorem, we can derive our lower bound for multi-round algorithms.
	
	Our $\Omega(\log^* n)$ bound serves as a proof of concept that our technique has the power to break through the bounds of previous techniques by demonstrating that this is indeed possible.
	Note that purely information-theoretic reasoning runs into the obstacle that in the $KT_1$ model, $\Theta(\log n)$ bits have already crossed each edge ``before the algorithm starts.''
	Accordingly, we argue that our approach represents a qualitative improvement over existing techniques.
	Proving lower bounds higher than $\log^*{n}$ requires new ideas, but we are hopeful that combining our technique with a more sophisticated analysis will lead to much higher lower bounds, both for triangle detection and for other non-global problems discussed above.
	
	As an additional indication that the proposed technique is of wider applicability, we apply it to $k$-cycle detection for $k>3$, for showing lower bounds on the bandwidth of optimal-round algorithms.
	However, here the information bottleneck technique is applicable again, and we do not obtain stronger bounds using our fooling views.
	The details of these constructions are given in Appendix~\ref{sec:kcycle-reduction};
	the main body of the paper focuses on triangle detection. We conclude with open questions in Section~\ref{sec:discussion}.
	
	\subsection{Further related work}\label{sec:related}
	
	%###
	\paragraph{Edge-crossings.}
	%###
	The basic topology components for our lower bound in Section~\ref{sec:OneBit} are triangles and $6$-cycles. The main hardness that we show for a node in deciding whether it participates in a triangle or not comes from not knowing the neighbors of its neighbor. That is, a node is unable to distinguish between two triangles and one $6$-cycle, because only difference between these two cases is a single \emph{edge crossing}, which are two node-disjoint edges for which we swap the endpoints from $\{w,x\},\{w',x'\}$ to $\{w,x'\},\{w',x\}$. Edge crossings have previously aided the construction of lower bounds, such as lower bounds for message complexity of broadcast~\cite{AwerbuchGPV90} or of symmetry breaking~\cite{PaiPPR17}, as well as lower bounds for proof-labeling schemes~\cite{MorFP15}.
	
	%###
	\paragraph{The \congest{1} model.}
	%###
	While we view our results for the \congest{1} model more as a proof of concept for our technique rather than as a bound that attempts to capture the true complexity, this model has been attracting interest by itself in previous work. It has been shown in~\cite{MetivierRSZ11} that the cornerstone $O(\log n)$-round algorithms for maximal independent set~\cite{Luby86,AlonBI86} can be made to work with even with a bandwidth of $B=1$, and this problem was studied also in, e.g.,~\cite{Bar-NoyNN90,KothapalliSOS06}. Using the standard framework of reduction from $2$-party communication complexity, a $\Omega(\sqrt{n})$ lower bound for the number of rounds required for $4$-cycle detection can be directly deduced from~\cite{DruckerKO13}. In fact, all lower bounds obtained using this framework are with respect to the bandwidth $B$, and therefore imply lower bounds also for the case of $B=1$.

	\section{Model and Definitions}
	Our model is a network of $n$ nodes, each having an ID in $[N]$, for some polynomial $N$ in $n$. Each node starts with knowledge of its ID as well as the IDs of its neighbors. This is known as the $KT_1$ model, and differs from the $KT_0$ model in which each node starts only with knowledge of its own ID. The nodes communicate in synchronous rounds, in which each node can send a $B$-bit messages to each of its neighbors.
	
	The model we consider is the \congest{B} model~\cite{Peleg00}, where $B$ is the bandwidth that is given for each message\footnote{For simplifying the exposition, we will assume that nodes send $B$ bits in each round and cannot send less bits or remain silent. It is easy to verify that this only affects the constants in our asymptotic notation.}. We will focus on the following problem.
	\begin{definition}\label{def:LTD}[\textbf{Triangle Membership}].
		In the triangle membership problem, each node needs to detect whether it is a part of a triangle.
	\end{definition}
	We remark that our second lower bound also applies to the \emph{triangle detection} problem, where it is sufficient that \emph{some} node learns that there is a triangle, without being able to tell who participates.
	This is also the guarantee given by the sublinear-round algorithm of Izumi and Le Gall~\cite{IzumiG17}.
	
\section{A Bandwidth Lower Bound for \texorpdfstring{$1$}{1}-round Triangle Membership}\label{sec:bandwidth}

In this section we show the following theorem.
\begin{theorem}~\label{thm: Bandwidth}
	The triangle membership problem cannot be solved by a single-round algorithm in the \congest{B} model unless $B \geq \log((\frac{n-2}{2(\Delta-1)})^{\Delta-1})$.
\end{theorem}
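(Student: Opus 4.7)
The plan is to construct two graphs $G_0,G_1$ on $n$ nodes that yield the same view at a distinguished node $v$, while $v$ is in a triangle only in $G_1$; since a $1$-round algorithm's output at $v$ is a function of $v$'s ID, its neighbors' IDs, and the $\Delta$ messages it receives, matching views imply matching outputs, contradicting correctness. The adversary chooses an ID for $v$ and for $v$'s neighbors $u_1,\ldots,u_\Delta$, letting $X := [n] \setminus (\{v\} \cup \{u_1,\ldots,u_\Delta\})$ be a pool of auxiliary IDs. In $G_0$, each $u_i$ has neighborhood $\{v\} \cup S_i$ with $S_i \subseteq X$ of size $\Delta-1$, and the $S_i$'s are pairwise disjoint, so there is no triangle through $v$. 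In $G_1$, the edge $\{u_1,u_2\}$ is added, so $u_1$'s neighborhood becomes $\{v,u_2\} \cup T_1$ with $T_1 \subseteq X$ of size $\Delta-2$, and symmetrically for $u_2$; the neighborhoods of $u_3,\ldots,u_\Delta$ are unchanged.

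The key step is a pigeonhole argument. Let $f_u$ denote the function mapping any candidate set of $\Delta - 1$ extra neighbors of $u$ (drawn from $[n] \setminus \{v,u\}$) to the $B$-bit message $u$ sends to $v$. The domain of $f_{u_1}$ has size $\binom{n-2}{\Delta-1}$, and the elementary inequality $\binom{n-2}{\Delta-1} \geq \bigl(\tfrac{n-2}{2(\Delta-1)}\bigr)^{\Delta-1}$ (valid for $\Delta - 1 \leq (n-2)/2$) shows that the hypothesis $2^B < \bigl(\tfrac{n-2}{2(\Delta-1)}\bigr)^{\Delta-1}$ forces two distinct $(\Delta-1)$-subsets $T,T' \subseteq [n] \setminus \{v,u_1\}$ with $f_{u_1}(T)=f_{u_1}(T')$. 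Choosing any $u^* \in T \setminus T'$ (nonempty since $|T|=|T'|$ and $T \neq T'$), we commit $u_2 := u^*$ and take $u_1$'s extra-neighbor set to be $T$ in $G_1$ and $T'$ in $G_0$; this makes $u_1$'s message to $v$ identical in the two graphs, while only $T$ contains $u_2$, creating the edge $\{u_1,u_2\}$ precisely in $G_1$.

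A symmetric pigeonhole on $f_{u_2}$ yields sets $T_2, T_2'$ with $u_1 \in T_2 \setminus T_2'$ and $f_{u_2}(T_2)=f_{u_2}(T_2')$, letting $u_2$'s extra neighbors be $T_2$ in $G_1$ and $T_2'$ in $G_0$, so $u_2$'s message to $v$ also coincides. Finally, we choose $u_3,\ldots,u_\Delta$ together with pairwise disjoint neighborhoods $S_3,\ldots,S_\Delta \subseteq X$ avoiding all of $T, T', T_2, T_2'$; a straightforward counting shows $|X|$ is large enough to accommodate this whenever $n$ exceeds a modest multiple of $\Delta$. Since $u_i$ for $i \geq 3$ has identical neighborhoods in the two graphs, its message agrees, completing the construction: $v$'s view in $G_0$ and $G_1$ is identical, yet $v$ is in a triangle only in $G_1$.

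The main obstacle is coordinating the two pigeonhole applications: a single application yields a collision for $f_{u_1}$ with \emph{some} element in its symmetric difference, but to also use $f_{u_2}$ we must ensure that the $u_1$ picked earlier actually lies in the symmetric difference of some collision for $f_{u_2}$. Because the adversary may choose the pair $(u_1,u_2)$ from a large universe and the total number of collision pairs is enormous when $2^B$ is small, a double-counting or averaging argument over candidate pairs should furnish a coordinated choice. The factor $2$ in the denominator $2(\Delta-1)$ of the bound quantifies this coordination loss relative to the pure pigeonhole threshold $\log\binom{n-2}{\Delta-1}$.
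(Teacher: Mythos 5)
Your overall approach matches the paper's: pigeonhole on the $B$-bit message as a function of the sender's unknown extra neighbors, then an indistinguishability argument at a fixed node. You also correctly identify the crux yourself: a naive pigeonhole on $f_{u_1}$ yields \emph{some} $u_2$ in a collision's symmetric difference, but nothing forces the already-chosen $u_1$ to lie in the symmetric difference of a collision for $f_{u_2}$. You flag this as ``the main obstacle'' and appeal to a ``double-counting or averaging argument'' without giving it---and that is precisely the missing technical content; it is the whole point of the paper's Lemma~\ref{ob: fooling sets} and Lemma~\ref{cor:SuperFND}. The paper closes the gap as follows. Call $w$ a $(v,u)$-\emph{fooling node} if there exist $S\neq S'$ with $w\in S\setminus S'$ and $m_{v\to u}(S)=m_{v\to u}(S')$. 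Lemma~\ref{ob: fooling sets} shows that whenever $B<\log\bigl(\tfrac{n-2}{2(\Delta-1)}\bigr)^{\Delta-1}$, every $v$ has at least $\tfrac{n-2}{2}$ fooling nodes: if fewer than half the nodes were fooling for $v$, then the $\binom{(n-2)/2}{\Delta-1}$ neighbor-sets drawn from the non-fooling ones would all require distinct messages, exceeding $2^B$ (this is exactly where the factor $2$ in the denominator arises---your intuition about it is in the right direction but informal). Lemma~\ref{cor:SuperFND} then averages over all pairs $(v,w)$ to extract a single node $w^*$ that is a fooling node simultaneously for a set $P_{w^*}$ of at least $\tfrac{n-2}{2}$ nodes $v$. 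The final pigeonhole is then applied to $f_{w^*}$ restricted to $(\Delta-1)$-subsets of $P_{w^*}$, yielding two colliding sets differing at some $v^*\in P_{w^*}$; and \emph{because} $v^*\in P_{w^*}$, the node $w^*$ is by definition in the symmetric difference of some collision for $f_{v^*}$. That is exactly the coordination you needed, and without carrying it out your proof does not go through. A secondary, inessential difference: the paper gives the fooled node degree two (only the two neighbors $v^*$ and $w^*$), which eliminates all the bookkeeping you incur with $u_3,\dots,u_\Delta$, their disjoint neighborhoods $S_i$, and keeping $T,T',T_2,T_2'$ clear of the other $u_j$'s.
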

Observe that this implies an $\Omega(\Delta\log n)$ lower bound on $B$ for $\Delta=O(n^{1-\epsilon})$, and an $\Omega(n)$ lower bound on $B$ for $\Delta=\frac{n-2}{4}+1$.
This lower bound is significantly easier to obtain than the one in Section~\ref{sec:OneBit}.
This has the advantage of demonstrating the technique in a simpler context before delving into the proof of Theorem~\ref{thm: main - triangles membership}.

The main line of proof is to show that if the size of the messages is less than $\log{(\frac{n-2}{2(\Delta-1)})^{\Delta-1}}$, then there are three nodes $u,w,v\in V$, such that if the only neighbors of $u$ are $v$ and $w$, then $u$ receives the same messages from $v$ and $w$ during the single communication round regardless of whether $v$ and $w$ are connected (see Figure~\ref{fig:indistinguishability}). This is the standard notion of indistinguishability, given here under bandwidth restrictions for the first time.

\begin{figure}[ht]
	\centering
	\includegraphics[width=\textwidth, trim={0 5.5cm 0 6cm}, clip]{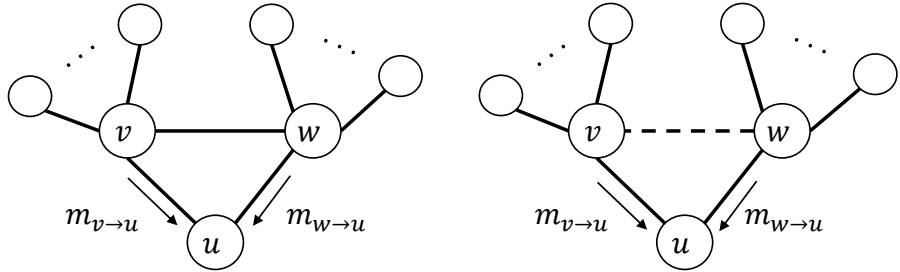}
	\caption{The node $u$ receives the same messages regardless of whether $v$ and $w$ are connected.}
	\label{fig:indistinguishability}
\end{figure}

For simplicity, we assume that each node has exactly $\Delta$ neighbors, except some special node $u$, which is fixed in the rest of this section, which has only two neighbors. Denote by $N(v)$ the set of neighbors of a node $v$, and let $m_{v\rightarrow u}(S)$ denote the message sent from $v$ to $u$ during the single round, given that $N(v) = S \cup \{u\}$.

The following two notions of fooling sets of nodes and fooling nodes are what allows us to capture indistinguishability in this setting. These are the specific shapes that our notion of \emph{fooling views} takes for obtaining this result.

\begin{definition}\label{def:foolingSetsNodes}
	Let $v\in V\setminus\{u\}$. A set of nodes $S$ is called a \emph{$(v,u)$-fooling set} if there is another set of nodes $S'\neq S$ such that $m_{v\rightarrow u}(S) = m_{v\rightarrow u}(S')$.
	
	A node $w\in V\setminus \{v,u\}$ is called a \emph{$(v,u)$-fooling node} if there are two sets of nodes $S\neq S'$, such that $w\in S$ and $w\notin S'$, and  $m_{v\rightarrow u}(S) = m_{v\rightarrow u}(S')$.
\end{definition}

We denote by $F_{nodes}(v,u)$ the set of $(v,u)$-fooling nodes. 

Our first step towards proving Theorem~\ref{thm: Bandwidth} is to show that for each $v\in V\setminus \{u\}$ there are many $(v,u)$-fooling nodes.

\begin{lemma}[\textbf{Many fooling nodes}]\label{ob: fooling sets}
	If $B < \log((\frac{n-2}{2(\Delta-1)})^{\Delta-1})$, then for each $v\in V\setminus \{u\}$ it holds that $|F_{nodes}(v,u)|\geq \frac{n-2}{2}$.
\end{lemma}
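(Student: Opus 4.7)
The plan is to prove the contrapositive by a clean pigeonhole/injectivity argument on the neighborhood-sets that $v$ might have. I will assume for contradiction that $|F_{nodes}(v,u)| < \tfrac{n-2}{2}$ and derive an upper bound on the number of distinct sets $S$ that must produce distinct messages, leading to a lower bound on $B$ that contradicts the hypothesis.

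First I would define the set of \emph{non-fooling} nodes $\mathcal{N} := (V\setminus\{v,u\}) \setminus F_{nodes}(v,u)$, so that by assumption $|\mathcal{N}| > \tfrac{n-2}{2}$. The key observation, which is essentially a restatement of Definition~\ref{def:foolingSetsNodes}, is that for every non-fooling node $w\in\mathcal{N}$ and every pair of $(\Delta-1)$-subsets $S\neq S'$ of $V\setminus\{v,u\}$ with $m_{v\to u}(S)=m_{v\to u}(S')$, either $w$ lies in both or in neither. Equivalently, if two such sets agree on their message, their symmetric difference is entirely contained in $F_{nodes}(v,u)$.

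Next I would restrict attention to the $(\Delta-1)$-subsets of $\mathcal{N}$, i.e.\ neighborhood candidates of $v$ (other than $u$) drawn only from non-fooling nodes. For any two distinct such sets $S,S'\subseteq\mathcal{N}$, their symmetric difference is a nonempty subset of $\mathcal{N}$, and so contains some non-fooling node $w$; by the previous paragraph, this forces $m_{v\to u}(S)\neq m_{v\to u}(S')$. Hence $m_{v\to u}$ is injective on $\binom{\mathcal{N}}{\Delta-1}$, and therefore
\[
2^{B}\;\geq\;\binom{|\mathcal{N}|}{\Delta-1}\;\geq\;\binom{\lceil(n-2)/2\rceil}{\Delta-1}\;\geq\;\left(\frac{n-2}{2(\Delta-1)}\right)^{\Delta-1},
\]
using the elementary estimate $\binom{a}{b}\geq(a/b)^{b}$. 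Taking logarithms contradicts the hypothesis $B<\log\bigl((\tfrac{n-2}{2(\Delta-1)})^{\Delta-1}\bigr)$, completing the proof.

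There is no serious obstacle here; the only subtlety is to be careful about the asymmetry in Definition~\ref{def:foolingSetsNodes} between the ambient set of candidate neighborhoods $\binom{V\setminus\{v,u\}}{\Delta-1}$ and the sub-family $\binom{\mathcal{N}}{\Delta-1}$ on which one needs injectivity: the witnessing collision for a fooling node is allowed to use \emph{any} two sets, but the contradiction in step two only needs the two sets $S,S'$ drawn from $\mathcal{N}$. Once this is clearly set up, the counting bound is immediate. The mild technical point that one should handle is the floor/ceiling when $n-2$ is odd, which can be absorbed into the estimate above without affecting the conclusion.
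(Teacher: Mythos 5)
Your proof is correct and is essentially the same pigeonhole argument as the paper's: assume few fooling nodes, observe that the message function $m_{v\to u}$ must be injective on $(\Delta-1)$-subsets of the non-fooling nodes (since any collision would place a non-fooling node in a nonempty symmetric difference, contradicting its non-fooling status), and conclude $2^B \geq \binom{\lceil (n-2)/2\rceil}{\Delta-1}$. Your reformulation via symmetric differences makes the injectivity step slightly more explicit than the paper's, but the underlying argument and estimates are identical.
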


\begin{proof}
	Assume towards a contradiction that $|F_{nodes}(v,u)|<\frac{n-2}{2}$. Thus, there are $k'\geq\frac{n-2}{2}$ non $(v,u)$-fooling nodes, denote this set of non $(v,u)$-fooling nodes by $NF_{nodes}(v,u)$, and denote the family of sets of size $\Delta-1$ over nodes in $NF_{nodes}(v,u)$ by $NF_{sets}(v,u)$. It holds that
	$$|NF_{sets}(v,u)|={|NF_{nodes}(v,u)|\choose \Delta-1}\geq {\frac{n-2}{2}\choose \Delta-1}\geq\left(\frac{n-2}{2(\Delta-1)}\right)^{\Delta-1}$$ 
	
	Observe that $v$ must send a unique message to $u$ on each of these sets, since otherwise, there are two sets $S_1\neq S_2\in NF_{sets}(v,u)$ such that $m_{v\rightarrow u}(S_1)=m_{v\rightarrow u}(S_2)$, and therefore, by Definition~\ref{def:foolingSetsNodes}, at least one node in $NF_{nodes}(v,u)$ is $(v,u)$-fooling. This implies that $B\geq \log((\frac{n-2}{2(\Delta-1)})^{\Delta-1})$, a contradiction. 
\end{proof}

The strength of having many $(v,u)$-fooling nodes for every node $v$ is that it implies that there is a node $w^*$ that is a $(v,u)$-fooling node for many nodes $v$.

\begin{lemma}\label{cor:SuperFND}
	If $B < \log((\frac{n-2}{2(\Delta-1)})^{\Delta-1})$, then there is a node $w^*\in V$ and a set of nodes $P_{w^*}$ of size $\frac{n-2}{2}$, such that for each $v\in P_{w^*}$, it holds that $w^*\in F_{nodes}(v,u)$.
\end{lemma}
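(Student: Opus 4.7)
The plan is a straightforward double counting argument on top of Lemma~\ref{ob: fooling sets}. First I would set up a bipartite incidence structure between potential ``source'' nodes $v \in V \setminus \{u\}$ and potential ``fooling'' nodes $w \in V \setminus \{u\}$, where we put an edge between $v$ and $w$ exactly when $w \in F_{nodes}(v,u)$ (in particular this forces $w \neq v$). The goal is to find a $w^*$ with high degree on the ``source'' side.

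Next I would count the total number of incidences in two ways. Summing by $v$, Lemma~\ref{ob: fooling sets} gives that each of the $n-1$ nodes $v \in V \setminus \{u\}$ contributes at least $\frac{n-2}{2}$ fooling nodes, so the total is at least $(n-1) \cdot \frac{n-2}{2}$. Summing by $w$, defining $P_w := \{v \in V \setminus \{u, w\} : w \in F_{nodes}(v,u)\}$, the total equals $\sum_{w \in V \setminus \{u\}} |P_w|$, a sum of $n-1$ terms. By averaging, at least one term satisfies $|P_{w^*}| \geq \frac{n-2}{2}$, which is exactly the claim after (if necessary) trimming $P_{w^*}$ to size exactly $\frac{n-2}{2}$.

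There is no real obstacle here: the lemma is just an averaging/pigeonhole consequence of the ``many fooling nodes'' guarantee, and the only thing to be careful about is the bookkeeping of which node sets are being averaged over (excluding $u$ everywhere, and noting that a $(v,u)$-fooling node is by definition distinct from both $v$ and $u$, so the incidence graph is well-defined on $V \setminus \{u\}$). The hypothesis $B < \log((\frac{n-2}{2(\Delta-1)})^{\Delta-1})$ is used solely to invoke Lemma~\ref{ob: fooling sets}; no further bandwidth reasoning is needed at this step.
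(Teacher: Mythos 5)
Your proposal is correct and follows essentially the same route as the paper: the paper also defines an indicator $X_{w,v}=\mathbf{1}[w\in F_{nodes}(v,u)]$, swaps the order of summation over $v$ and $w$, lower-bounds the total by $(n-1)\cdot\frac{n-2}{2}$ via Lemma~\ref{ob: fooling sets}, and applies averaging over the $n-1$ choices of $w$ to extract $w^*$. Your bipartite-incidence phrasing and the small remarks about excluding $u$ (and $w=v$) and trimming $P_{w^*}$ to exact size are cosmetic differences only.
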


\begin{proof}
	For each $w,v\in V$, let $X_{w,v}$ be an indicator defined as follows:
	\begin{align*}X_{w,v} = \begin{cases}
			1 & \text{if } w\in F_{nodes}(v,u)\\
			0 & otherwise
		\end{cases}
	\end{align*}
	
	Since
	\begin{align*}
		\sum_{w\in V\setminus \{u\}} \sum_{v\in V\setminus \{u\}} X_{w,v} =
		\sum_{v\in V\setminus \{u\}} \sum_{w\in V\setminus \{u\}} X_{w,v},			
	\end{align*}
	
	Lemma~\ref{ob: fooling sets} gives that:
	\begin{align*}
		\sum_{w\in V\setminus \{u\}} \sum_{v\in V\setminus \{u\}} X_{w,v} \geq
		(n-1)\cdot\frac{n-2}{2}
	\end{align*}
	Therefore, there must be a node $w^*$ and a set of nodes $P_{w^*}$ of size $\frac{n-2}{2}$, such that for each $v\in P_{w^*}$, it holds that $X_{w,v}=1$. \footnote{In fact, one can show that there are many such nodes as $w^*$, but we do not use this in this section.}
\end{proof}

Lemma~\ref{cor:SuperFND} is what allows us to prove that $u$ cannot solve triangle membership, as follows.

\begin{proofof}{Theorem~\ref{thm: Bandwidth}}
	Let $w^*$ be the node provided by Lemma~\ref{cor:SuperFND}. It holds that $\binom{|P_{w^*}|}{\Delta-1}\geq (\frac{n-2}{2(\Delta-1)})^{\Delta-1}$. If $w^*$ sends to $u$ less than $\log{(\frac{n-2}{2(\Delta-1)})^{\Delta-1}}$ bits, then there are two sets of nodes $\{v_1,...,v_{\Delta-1}\}\neq \{v'_1,...,v'_{\Delta-1}\}\subseteq P_{w^*}$, such that $m_{{w^*}\rightarrow u}(\{v_1,...,v_{\Delta-1}\})=m_{{w^*}\rightarrow u}(\{v'_1,...,v'_{\Delta-1}\})$. Let $v^*$ be a node such that $v^*\in \{v_1,...,v_{\Delta-1}\}$ and $v^*\notin \{v'_1,...,v'_{\Delta-1}\}$. Since $v^*\in P_{w^*}$, by Definition~\ref{def:foolingSetsNodes}, it holds that there are two sets of nodes $S\neq S'$, such that $w^*\in S$ and $w^*\notin S'$, and $m_{v^*\rightarrow u}(S)=m_{v^*\rightarrow u}(S')$.
	
	To summarize, if the size of the messages is less than $\log{(\frac{n-2}{2(\Delta-1)})^{\Delta-1}}$, then there are two nodes $w,v\in V\setminus \{u\}$, such that if $u$ is connected to both $v$ and $w$, then $u$ receives the same messages from $v$ and $w$ during the single communication round regardless of whether $v$ and $w$ are connected (see Figure~\ref{fig:indistinguishability}).
\end{proofof}

\section{A Round Lower Bound for Triangle Detection with \texorpdfstring{$B=1$}{B = 1}}
\label{sec:OneBit}

Our main goal in this section is to prove the following theorem.

\begin{theorem}\label{thm: main - triangles membership}
	Any algorithm for triangle membership in the \congest{1} model requires $\Omega(\log^*n)$ rounds, even for graphs of maximum degree 2.
\end{theorem}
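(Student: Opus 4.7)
The plan is to prove the bound by induction on the round number, extending the one-round fooling-views methodology of Section~\ref{sec:bandwidth} to the multi-round setting. I would work in the regime where the network itself is a constant-size collection of triangles and six-cycles (so maximum degree $2$), and $n$ denotes the size of the $KT_1$ namespace $[N]$ from which IDs are drawn. The decisive observation is that a node sitting in a triangle and a node sitting on a six-cycle have \emph{identical} $KT_1$ local data — both have two neighbors of prescribed identifiers — so all distinguishing power must come from communicated bits.

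First I would formalize the central notion. Call an ordered triple of distinct IDs $(a,b,c)$ a \emph{$t$-fooled triangle} if, when the network $G$ has $\{a,b,c\}$ induced as a triangle, each of $a,b,c$ has a $t$-round view in $G$ that coincides with its $t$-round view in some alternative $KT_1$-consistent graph $G'$ (specifically a six-cycle through that node) in which the node does \textbf{not} belong to any triangle. By construction, a correct triangle-membership algorithm cannot terminate on a $t$-fooled triangle at round $t$, so it suffices to exhibit at least one $t$-fooled triangle for every $t \leq c\log^* n$. The base case $t=0$ is immediate: no bits have crossed any edge, and the initial data of a node is just the pair of neighbor IDs, which the triangle and the six-cycle share by design.

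The inductive step, from a large family $\mathcal{T}_t$ of $t$-fooled triangles to a nonempty $\mathcal{T}_{t+1}$, is the core of the argument, and it is where I would deploy Theorem~\ref{thm: large hypergraphs1}. I view $\mathcal{T}_t$ as the edge set of a $3$-uniform hypergraph on $[N]$; provided $|\mathcal{T}_t| \geq N^{3-\ell_t^{-2}}$, Erd\H{o}s's theorem yields a complete tripartite sub-hypergraph with parts $A,B,C$ of size $\ell_t$, meaning \emph{every} triple in $A\times B\times C$ is already $t$-fooled. I would then use the $B=1$ bandwidth restriction: in round $t+1$ each node sends a single bit to each of its (at most two) neighbors, and that bit is a function of the $t$-round view. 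Pigeonholing over the $O(1)$ possible bit-pattern "labels" a node can use on its two triangle-edges, I would thin the parts down to subsets $A',B',C'$ of size $\Omega(\ell_t)$ on which every node in a given part emits the same outgoing bits along its triangle-edges.

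The final move is the edge-crossing: for any two triples $(a_1,b_1,c_1),(a_2,b_2,c_2)\in A'\times B'\times C'$, swap the edges $\{b_1,c_1\},\{b_2,c_2\}$ to $\{b_1,c_2\},\{b_2,c_1\}$. The result is a single six-cycle on the same vertex set; since each node retains its exact pair of neighbor IDs and, by the uniform-message property, receives the exact same bit from each neighbor in round $t+1$, the round-$(t+1)$ view of each of the six nodes is unchanged. Combined with the inductive hypothesis that the original triples were $t$-fooled, this yields alternative non-triangle witnesses for every node, placing these triples into $\mathcal{T}_{t+1}$.

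Quantitatively, if $|\mathcal{T}_t|\geq f(t)$ then after Erd\H{o}s and pigeonhole one gets $|\mathcal{T}_{t+1}|\gtrsim f(t)^{c'}$ for some constant $c'<1$ in the exponent, which iterated from $f(0)=\binom{N}{3}$ downwards yields a $\log^*$-type recursion; this is exactly the right growth to produce the $\Omega(\log^* n)$ lower bound. The step I expect to be the main obstacle is guaranteeing the view-preservation claim across \emph{all} $t+1$ rounds simultaneously, not just in round $t+1$: the six-cycle alternative used for each node in the edge crossing must be consistent with the inductive witness graphs that certified $t$-foolery, which forces the edge-crossing operation and the choice of witnesses to be done in a coordinated, symmetry-respecting way. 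Managing this bookkeeping — together with the constant-factor bandwidth-pigeonhole losses that accumulate inside Erd\H{o}s's sharp edge-density bound — is what makes the argument delicate and caps the reachable bound at $\log^* n$ rather than something larger.
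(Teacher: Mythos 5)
Your proposal correctly identifies the right ingredients — Erd\H{o}s's tripartite hypergraph theorem, six-cycles as witness graphs, and pigeonholing on one-bit messages — and the overall log-star iteration structure matches the paper. But the inductive invariant you set up is too weak to close the argument, and the step you flag as ``the main obstacle'' is in fact the crux of the proof, not bookkeeping.

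Concretely: your notion of a ``$t$-fooled triangle'' only requires that each of $a,b,c$ individually possesses \emph{some} non-triangle witness graph. This does not survive the inductive step. To argue that $a_1$'s round-$(t+1)$ view in the triangle $(a_1,b_1,c_1)$ matches its view in the crossed $6$-cycle $a_1-b_1-c_2-a_2-b_2-c_1-a_1$, you must know that $b_1$ sends the same bit to $a_1$ in both scenarios — which requires knowing that $b_1$'s \emph{own} $t$-round view in that very $6$-cycle coincides with $b_1$'s $t$-round view in the triangle $(a_1,b_1,c_2)$. That is a property about the specific $6$-cycle built from the other nodes of the cube, not about whichever witness graph happened to certify $b_1$'s $t$-fooledness. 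The paper resolves this by maintaining the stronger $(3\leftrightarrow 6)^t$-fooling cube invariant (Definition~\ref{def: main cube triangle 6-cycle}): for \emph{every} pair of disjoint triples drawn from a cube $U\times W\times X$, the triangle and three associated $6$-cycles are simultaneously indistinguishable from the perspective of each of $u_1,w_1,x_1$. The inductive step (Lemma~\ref{lemma: main multi-round}) then chains Property~2(b) applied to the triples $(u_1,w_1,x_2),(u_2,w_2,x_1)$ with the $(t+1)$-fooling-cube-of-triangles property to deduce Property~2(a) at level $t+1$. Without that all-pairs, all-perspectives coherence built into the invariant, the chaining is unavailable.

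Two smaller issues. First, your statement that after the edge crossing ``each node retains its exact pair of neighbor IDs'' is false: in the $6$-cycle only $a_1$ and $a_2$ keep their original neighbors; the other four nodes swap one neighbor and match a \emph{different} triple of the cube, which is precisely why the all-pairs invariant is needed. Second, the pigeonhole you describe — thinning each part down by $O(1)$ possible outgoing bit patterns per node — does not directly apply, since the round-$(t+1)$ bit from $w$ to $u$ depends on the entire triple $(u,w,x)$, not on $w$ alone; the paper instead builds a $\{0,1\}$-valued $s\times s\times s$ indicator cube with a constant density of ones and applies Corollary~\ref{cor: large hypergraphs} a second time (inside Claim~\ref{claim: helper for main lemma multy-round triangles}) to extract a monochromatic subcube of side $\Theta(\sqrt{\log s})$, before composing three such applications to get $\Theta(\sqrt{\log\log\log s})$.
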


The hard instances for this setting are as simple as tripartite graphs with degree 2, as follows.
Let $\mathcal{G}=\{(V=A_1\dot\cup A_2\dot\cup A_3,E)\mid E\subseteq (A_1\times A_2)\cup (A_1\times A_3)\cup (A_2\times A_3)\}$ be the family of tripartite graphs such that for each $G\in \mathcal{G}$, it holds that $|A_1|=|A_2|=|A_3|=n/3$, and for every $i \in \{1,2,3\}$, each node in $A_i$ has exactly one neighbor in each of $A_{j}$ and $A_{k}$, where $j\neq k\neq i \in \{1,2,3\}$.

In a nutshell, we show that there must be 6 nodes that cannot detect whether they constitute a $6$-cycle or two triangles. Remarkably, this simple intuitive task requires us to develop novel machinery and make use of known results in extremal graph theory. Our approach has the compelling feature that -- up to one round -- in this setting triangle detection is equivalent to triangle membership, as a node detecting a triangle can infer that it is part of a triangle itself and just needs to inform its neighbors.
Thus, we obtain the same lower bound for the, in general, possibly easier problem of triangle detection. Moreover, the lower bound depends only on the set of \emph{possible} IDs, not the actual number of nodes in the graph.

\begin{theorem}\label{thm:namespace}
	Any algorithm for triangle detection in the \congest{1} model requires $\Omega(\log^*N)$ rounds, even for graphs of maximum degree 2 with only 6 nodes, given a namespace of size $N$.
\end{theorem}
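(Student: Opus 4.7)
The plan is to reduce the statement to exhibiting, for any $t$-round algorithm with $t = o(\log^* N)$, six IDs $a,a',b,b',c,c' \in [N]$ such that the two labelled instances — the two triangles $\{a,b,c\},\{a',b',c'\}$ and the 6-cycle $a - b - c - a' - b' - c' - a$ — produce identical outputs at every node after $t$ rounds. These two instances differ only by the single edge crossing that swaps the pair $\{a,c\},\{a',c'\}$ for $\{a,c'\},\{a',c\}$, but triangle detection must return YES on the former and NO on the latter, so any such indistinguishability immediately contradicts correctness. This reduces the theorem to a purely combinatorial statement about the algorithm on a six-node gadget with IDs drawn from a universe of size $N$.

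To construct the six IDs I would maintain, inductively over $r = 0,1,\ldots,t$, a tripartite family $A_r,B_r,C_r \subseteq [N]$ of \emph{$r$-fooled} IDs with the invariant that for any two triples $(a,b,c),(a',b',c') \in A_r \times B_r \times C_r$, the round-$r$ bits sent across every one of the twelve directed edges of the gadget agree between the 2-triangles and the 6-cycle labellings. Starting from $A_0=B_0=C_0=[N]$ where the invariant holds vacuously, the inductive step from $r$ to $r+1$ would process each of the constantly many directed edge roles in turn. For each such role the round-$(r+1)$ bit is a single bit that is a function of the sender's round-$r$ view, which by the inductive hypothesis differs between the two instances only through the ID of the sender's ``other'' neighbor. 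Viewing this bit as a $2$-colouring of the 3-uniform tripartite hypergraph indexed by $A_r \times B_r \times C_r$, I would invoke Theorem~\ref{thm: large hypergraphs1} with $k=3$ to pull out a complete monochromatic 3-partite sub-hypergraph, then intersect the resulting sub-parts over all edge roles, obtaining new parts $A_{r+1},B_{r+1},C_{r+1}$ of size $\Omega(\sqrt{\log |A_r|})$ on which the round-$(r+1)$ messages agree between the two instances. Iterating, the part sizes remain at least $2$ for $\Omega(\log^* N)$ rounds, so for $t = o(\log^* N)$ one can still pick $\{a,a'\},\{b,b'\},\{c,c'\}$ from the final parts and conclude.

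The main obstacle will be the inductive step, and specifically controlling how the round-$(r+1)$ bit on an ``affected'' edge such as $c \to a$ in the 2-triangles instance translates to the corresponding bit on the edge $c' \to a$ in the 6-cycle instance: the two bits are produced by the algorithm from \emph{different} sender-IDs and \emph{different} histories, so what Theorem~\ref{thm: large hypergraphs1} has to equalise is really a \emph{pair} of functions on $A_r \times B_r \times C_r$ that must become simultaneously constant on the extracted sub-parts. Handling this pair carefully is what makes the $k=3$ (rather than $k=2$) case of Erd\"os's theorem indispensable: the tripartite formulation keeps the three structural roles of the six-node gadget separated, which both lets the invariant be preserved edge by edge and keeps the per-round shrinkage at only a logarithm, ultimately yielding the $\Omega(\log^* N)$ bound.
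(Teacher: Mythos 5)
Your high-level plan is the right one (an iterative fooling argument powered by the $k=3$ case of Theorem~\ref{thm: large hypergraphs1}, shrinking a tripartite ID-cube once per round and surviving $\Omega(\log^* N)$ iterations), but the inductive step as you describe it has a genuine gap, and it is exactly the point where the paper has to do something cleverer than a single Erd\H{o}s extraction.

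The issue is that you try to maintain, in one shot, the invariant ``the round-$r$ bit on each edge agrees between the two-triangles labelling and the six-cycle labelling.'' The bit on a crossing edge such as $c'\to a$ in the 6-cycle is computed from $c'$'s view in the 6-cycle, which after a few rounds depends on \emph{all six} IDs $a,b,c,a',b',c'$, not on a single triple. It therefore cannot be regarded as ``a 2-colouring of the 3-uniform tripartite hypergraph indexed by $A_r\times B_r\times C_r$''; the object you would need to make monochromatic lives over pairs of triples. The paper sidesteps this by splitting the invariant in two: a \emph{$t$-fooling cube of triangles} (Definition~\ref{def: t-round fooling tripartite}), which only speaks about messages in triangle instances and whose defining constancy-over-the-third-coordinate is what actually gets extracted by Corollary~\ref{cor: large hypergraphs} (Claim~\ref{claim: helper for main lemma multy-round triangles}), and a \emph{$(3\leftrightarrow 6)^t$-fooling} property, which is never extracted via Erd\H{o}s at all but is \emph{derived} in Lemma~\ref{lemma: main multi-round}: the round-$(t{+}1)$ 6-cycle message from $w_1$ to $u_1$ is first equated to the triangle message in $(u_1,w_1,x_2)$ using the round-$t$ $(3\leftrightarrow 6,w_1)^t$ hypothesis applied to the \emph{different} pair of triples $(u_1,w_1,x_2),(u_2,w_2,x_1)$, and then to the triangle message in $(u_1,w_1,x_1)$ using the newly extracted $(t{+}1)$-fooling cube of triangles. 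Your ``pair of functions'' remark gestures at this difficulty, but without this two-layer decomposition the single extraction step you propose does not control the 6-cycle side of the equality.

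A second, related gap: you use a \emph{single} 6-cycle $a\text{-}b\text{-}c\text{-}a'\text{-}b'\text{-}c'\text{-}a$ and claim identical outputs at every node. Under that one labelling only $b$ and $b'$ keep the same neighbour set between the two instances; $a,c,a',c'$ each see a different second neighbour, so their round-$0$ views already differ and identical outputs for them are not a consequence of bit-agreement alone. The paper avoids this by using three distinct 6-cycle labellings for the same pair of triples (Properties~\ref{def: main property u}, \ref{def: main property w}, \ref{def: main property x} of Definition~\ref{def: main cube triangle 6-cycle}), each chosen so that one of the three roles keeps its neighbourhood intact, and then passes from the resulting membership lower bound to triangle detection by the one-round observation that in this 6-node, degree-2 setting a node that detects a triangle also knows it is in one. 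Your plan to directly fool all six nodes under one 6-cycle would need an extra argument for the four nodes whose neighbourhoods change, and you do not provide one.
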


The main challenge in proving a lower bound for more than one round is that for $t > 1$ the communication between the nodes in round $t$ does not depend only on their IDs and their sets of neighbors (as is the case for the first round), but rather it depends also on their \emph{views} after $t-1$ rounds, i.e., on the set of messages that the nodes receive during each of the first $t-1$ rounds. The view of a node after $t-1$ rounds may depend on all the nodes in its $(t-1)$-hop neighborhood.

Nevertheless, since the family of graphs $\mathcal{G}$ is defined such that, for each $G=(A_1\dot\cup A_2\dot\cup A_3,E)\in \mathcal{G}$, it holds that each node in each part of $G$ has exactly one neighbor in each of the other two parts, any cycle in any $G\in \mathcal{G}$ must be a connected component. Therefore, if there is a triple $(u,w,x)\in A_1\times A_2\times A_3$ that is connected by a triangle $(u,w,x)$ in $G$, then the communication between the nodes $u,w$ and $x$ during any round in $G$ depends only on the nodes $u,w$ and $x$. Similarly, if there is a 6-cycle $(u_1,w_1,x_1,u_2,w_2,x_2)$ in $G$, then the communication between the nodes $u_1,w_1,x_1,u_2,w_2,x_2$ during any round in $G$ depends only on the nodes $u_1,w_1,x_1,u_2,w_2,x_2$.

Our main line of proof is to show that for any algorithm for triangle membership with $B=1$, there exist $u_1,w_1,x_1,u_2,w_2,x_2\in A_1\dot\cup A_2\dot\cup A_3$ such that $u_1$ receives the same messages during the first $\Omega(\log^*n)$ rounds in two different scenarios, the first of which is a scenario in which $u_1$ participates in a triangle $(u_1,x_1,w_1)$, and the second is a scenario in which $u_1$ participates in a 6-cycle $(u_1,x_1,w_2,u_2,x_2,w_1)$. See Figure~\ref{fig:3x6} for an illustration.

Given three nodes $u,w,x\in A_1\dot\cup A_2\dot\cup A_3$, we denote by $m^t_{w\rightarrow u}((u,w,x))$ the message sent from $w$ to $u$ during round $t$, given that $u,w$ and $x$ are connected by a triangle. Similarly, given six nodes $u_1,w_1,x_1,u_2,w_2,x_2\in A_1\dot\cup A_2\dot\cup A_3$, we denote by $m^t_{w_1\rightarrow u_1}((u_1,w_1,x_2,u_2,w_2,x_1))$ the message sent from $w_1$ to $u_1$ during round $t$, given that the nodes $u_1,w_1,x_2,u_2,w_2,x_1$ are connected by a 6-cycle $(u_1,x_1,w_2,u_2,x_2,w_1)$.

The structure of the proof for the lower bound on the number of rounds is as follows. In Section~\ref{sec:fooling sets of triangles} we present the notion of \emph{fooling sets of triangles} which shares a similar spirit to Definition~\ref{def:foolingSetsNodes} and is the basis of our fooling views technique in this section. Next, in section~\ref{sec: triangles 6-cycles} we present the connection to 6-cycles and deduce our main result.

\begin{figure}
	\centering
	\includegraphics[width=\textwidth, trim={0 7cm 2cm 9cm}, clip]{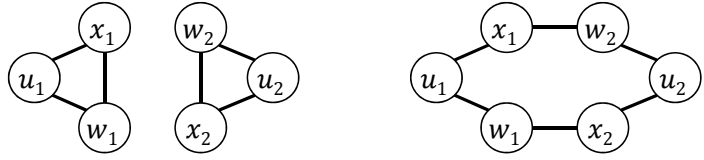}
	\caption{The structures we use are the triangle $(u_1,x_1,w_1)$ and the 6-cycle $(u_1,x_1,w_2,u_2,x_2,w_1)$.}
	\label{fig:3x6}
\end{figure}

\subsection{Fooling sets of triangles}\label{sec:fooling sets of triangles}

\begin{definition}\label{def: t-rounds fooling set}
	Fix a pair of nodes $(u,w)\in (A_i \times A_j) \cap E$, for $i \neq j$, and let $X=\{x_1,...,x_{|X|}\}\subseteq A_k$ for $k \neq i,j$. A set of triangles $\{(u,w,x)\mid x\in X\}$ is called a \emph{$(w,u)^t$-fooling set of triangles} if $w$ sends the same message to $u$ during each of the first $t$ rounds, in each of the triangles in $\{(u,w,x)\mid x\in X\}$. Formally, for each $1\leq i\leq t$,
	\center{
	$m^i_{w\rightarrow u}((u,w,x_1))=m^i_{w\rightarrow u}((u,w,x_2))=...=m^i_{w\rightarrow u}((u,w,x_{|X|})).$
	}
\end{definition}

We extend the notion of a fooling set of triangles to the notion of a \emph{fooling rectangle of triangles}:

\begin{definition}\label{def: t-rounds fooling rectangle}
	Fix a node $u\in A_i$. A rectangle $W\times X\subseteq A_j\times A_k$, for $j \neq k \neq i$, is called a \emph{$u^t$-fooling rectangle of triangles} if it satisfies the following properties.
	\begin{enumerate}
		\item For each $w\in W$, it holds that $\{(u,w,x)\mid x\in X\}$ is a $(w,u)^t$-fooling set of triangles.
		\item For each $x\in X$, it holds that $\{(u,w,x)\mid w\in W\}$ is an $(x,u)^t$-fooling set of triangles.
	\end{enumerate}
\end{definition}

Finally, we extend the notion of a fooling rectangle of triangles to a \emph{fooling cube of triangles}:%, as follows.

\begin{definition}\label{def: t-round fooling tripartite}
	A cube $U\times W\times X\subseteq A_1\times A_2\times A_3$ is called a \emph{$t$-fooling cube of triangles} if:% it satisfies the following properties.
	\begin{enumerate}
		\item For each $u\in U$, it holds that $W\times X$ is a $u^t$-fooling rectangle of triangles.
		\item For each $w\in W$, it holds that $U\times X$ is a $w^t$-fooling rectangle of triangles.
		\item For each $x\in X$, it holds that $U\times W$ is a $x^t$-fooling rectangle of triangles.
	\end{enumerate}
\end{definition}
The following observation follows immediately from Definition~\ref{def: t-round fooling tripartite}.

\begin{observation}\label{ob: zero cube}
	$A_1\times A_2\times A_3$ is a $0$-fooling cube of triangles.
\end{observation}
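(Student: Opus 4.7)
The plan is to unfold the three nested definitions with $t=0$ and observe that every required condition is vacuously satisfied. Concretely, I would take $U = A_1$, $W = A_2$, and $X = A_3$ and verify the three clauses of Definition~\ref{def: t-round fooling tripartite} one by one, each of which reduces to an instance of Definition~\ref{def: t-rounds fooling rectangle}, which in turn reduces to instances of Definition~\ref{def: t-rounds fooling set}.

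For clause 1, I would fix an arbitrary $u \in A_1$ and show that $A_2 \times A_3$ is a $u^0$-fooling rectangle of triangles. By Definition~\ref{def: t-rounds fooling rectangle}, this amounts to checking that for every $w \in A_2$, the set $\{(u,w,x) \mid x \in A_3\}$ is a $(w,u)^0$-fooling set of triangles, and symmetrically for every $x \in A_3$. Unfolding Definition~\ref{def: t-rounds fooling set} at $t=0$, the defining condition is the equality of messages $m^i_{w\to u}(\cdot)$ for every $1 \le i \le 0$. Since the range $\{1,\ldots,0\}$ is empty, this universally quantified statement holds vacuously, and hence every such set is trivially a $(w,u)^0$-fooling set of triangles.

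The exact same reasoning, with the roles of $A_1, A_2, A_3$ permuted, handles clauses 2 and 3 of Definition~\ref{def: t-round fooling tripartite}. Thus $A_1 \times A_2 \times A_3$ is a $0$-fooling cube of triangles.

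There is no real obstacle here; the observation is a base case whose entire content is the vacuity of an empty universal quantifier. Its purpose is to seed the inductive argument in the following subsection, where one presumably shows that from a sufficiently large $t$-fooling cube of triangles one can extract a (smaller but still large) $(t+1)$-fooling cube, eventually combined with Theorem~\ref{thm: large hypergraphs1} of Erdős to maintain a tripartite substructure across $\Omega(\log^* n)$ rounds.
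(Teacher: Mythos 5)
Your proof is correct and matches the paper's reasoning exactly: the paper states that the observation ``follows immediately from Definition~\ref{def: t-round fooling tripartite},'' and the immediacy is precisely the vacuity of the quantifier over $1 \le i \le 0$ that you spell out. Nothing is missing.
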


Our next step towards proving Theorem~\ref{thm: main - triangles membership} is to show that any $t$-fooling cube of triangles of size $s$ contains a $(t+1)$-fooling cube of triangles of size $\Omega(\sqrt{\log\log\log s})$. 
\begin{lemma}\label{lemma: multy-round triangles}
	If there is a $t$-fooling cube of triangles $U\times W\times X\subseteq A_1\times A_2\times A_3$ of size $|U|=|W|=|X|=s$, then there is a constant $\beta$ and a $(t+1)$-fooling cube of triangles $U'\times W'\times X'\subseteq U\times W\times X$ of size $|U'|=|W'|=|X'|=\beta\sqrt{\log\log\log s}$.
\end{lemma}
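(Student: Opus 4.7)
The plan is to carry out three successive applications of pigeonhole followed by Erd\H{o}s's extremal theorem (Theorem~\ref{thm: large hypergraphs1}), handling one edge of a triangle per iteration. The basic observation is that, since $B=1$, the round-$(t+1)$ interaction on any triangle $(u,w,x)$ consists of exactly six bits (two per edge), and each of these bits is a deterministic function of the triple $(u,w,x)$. Thus, finding a sub-cube on which all six bits are simultaneously constant immediately makes that sub-cube $(t+1)$-fooling: rounds $1,\ldots,t$ are inherited by the sub-cube directly from the $t$-fooling property of $U\times W\times X$ (restriction can only strengthen the constancy conditions of Definitions~\ref{def: t-rounds fooling set}--\ref{def: t-round fooling tripartite}), and constancy on the whole sub-cube of each round-$(t+1)$ bit is in particular constancy in the ``third coordinate'' required at round $t+1$.

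For the first iteration, I would color each $(u,w,x)\in U\times W\times X$ by the pair $\bigl(m^{t+1}_{u\to w}((u,w,x)),\;m^{t+1}_{w\to u}((u,w,x))\bigr)\in\{0,1\}^2$, a 4-coloring of the triples. Pigeonhole yields a color class of size at least $s^3/4$; viewing it as a 3-uniform tripartite hypergraph on $n=3s$ vertices, the density condition $m\ge n^{3-1/\ell^2}$ of Theorem~\ref{thm: large hypergraphs1} holds for $\ell=\Theta(\sqrt{\log s})$, so the theorem produces a complete 3-partite sub-hypergraph with parts of size $\ell_1=\Theta(\sqrt{\log s})$. A short case analysis on the class (in $U$, $W$, or $X$) of any single vertex of any part shows that the three parts must lie one each in $U$, $W$, and $X$, since every hyperedge of the ambient tripartite hypergraph touches each of $U,W,X$ exactly once; they thus give a sub-cube $U_1\times W_1\times X_1$ on which the two bits exchanged along the edge $\{u,w\}$ in round $t+1$ are constant.

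Iterating the same construction on $U_1\times W_1\times X_1$ for the edge $\{u,x\}$ yields a sub-cube $U_2\times W_2\times X_2$ of side $\ell_2=\Theta(\sqrt{\log\ell_1})=\Theta(\sqrt{\log\log s})$ on which the bits on $\{u,x\}$ are also constant, while the previous two bits remain constant by monotonicity under sub-cubing. A third iteration on the edge $\{w,x\}$ produces $U'\times W'\times X'$ of side $\ell_3=\Theta(\sqrt{\log\ell_2})=\Theta(\sqrt{\log\log\log s})$ on which all six round-$(t+1)$ bits are constant. Absorbing the accumulated constants into $\beta$, this is the $(t+1)$-fooling cube claimed by the lemma.

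The main obstacle I expect is the bookkeeping of constants through the three applications of Erd\H{o}s's theorem: the density condition must be verified carefully after each pigeonhole (always with $n=3\cdot(\text{current side})$ and $m\ge(\text{current side})^3/4$), and the resulting $\sqrt{\log}$-reductions must be composed while ensuring that the hidden multiplicative constants do not destroy the outer $\sqrt{\log\log\log s}$ bound. The other subtlety, already mentioned above, is that a complete $3$-partite sub-hypergraph of a tripartite $3$-uniform hypergraph automatically has its three parts one each inside $U$, $W$, and $X$; without this, the Erd\H{o}s extraction would not yield a sub-cube in the sense of Definition~\ref{def: t-round fooling tripartite}.
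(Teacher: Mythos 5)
Your proof is correct and uses the same essential strategy as the paper's: three successive pigeonhole-then-Erd\"os extractions, each shrinking the side from $s'$ to $\Theta(\sqrt{\log s'})$, for a final side of $\Theta(\sqrt{\log\log\log s})$. The difference is organizational. The paper groups the six round-$(t+1)$ bits by \emph{receiving node}: Claim~\ref{claim: helper for main lemma multy-round triangles} fixes one of the three vertex roles, assigns to each instantiation of that role and each neighbor a majority threshold $\tilde{m}$, double-counts to show that the resulting $\{0,1\}$-valued indicator cube has density at least $\tfrac{1}{32}$, and then invokes Corollary~\ref{cor: large hypergraphs}; the Claim is applied once per role. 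You instead group the bits by \emph{undirected edge}, pigeonholing directly on the $4$-coloring given by the ordered pair of bits crossing that edge, which yields density $\geq\tfrac14$ before each application of Theorem~\ref{thm: large hypergraphs1}. As a consequence, your subcube has the slightly stronger property that each round-$(t+1)$ bit is \emph{globally} constant, rather than merely constant in the ``third'' coordinate with a sender-and-receiver-dependent value as Definition~\ref{def: t-round fooling tripartite} actually requires; for $B=1$ this extra strength costs nothing, and the counting is a bit cleaner than the paper's two-stage majority argument. Both arguments also rely on, and handle identically, the alignment observation that any complete $3$-partite $3$-uniform subhypergraph of a tripartite hypergraph with parts of size $\geq 2$ must place one part in each of $U$, $W$, $X$; the paper absorbs this into the statement of Corollary~\ref{cor: large hypergraphs}, whereas you spell it out explicitly.
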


The key combinatorial ingredient for proving Lemma~\ref{lemma: multy-round triangles} is the following corollary of Theorem~\ref{thm: large hypergraphs1}.
\begin{corollary}[of Theorem~\ref{thm: large hypergraphs1}]
	\label{cor: large hypergraphs}
	For every constant $\alpha$, for sufficiently large $s$, any $s\times s\times s$ boolean cube that contains at least $e^{-1/\alpha^2}s^3$ entries that are $1$, contains a $1$-monochromatic subcube, such that each side of the subcube is of size $\alpha\sqrt{\log s}$.
\end{corollary}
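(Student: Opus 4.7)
The plan is to reduce the three-dimensional problem to a two-dimensional one via a double-counting argument and then invoke Theorem~\ref{thm: large hypergraphs1} in the form $k=2$ on the resulting bipartite graph. Routing the proof through the partite structure (rather than applying Theorem~\ref{thm: large hypergraphs1} directly to the $3s$-vertex hypergraph obtained by forgetting the tripartition) is needed because the latter approach loses an additive $\log 27$ term in the exponent denominator, which spoils the clean correspondence $\epsilon = e^{-1/\alpha^2}$ claimed by the corollary.

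First, I would reinterpret the $s \times s \times s$ boolean cube as a $3$-partite $3$-uniform hypergraph $H$ with parts $A, B, C$ of size $s$ each, where $(a,b,c)$ is an edge iff the corresponding cube entry equals $1$. Setting $\epsilon := e^{-1/\alpha^2}$, the hypothesis gives $|E(H)| \geq \epsilon s^3$. For each pair $(b,c) \in B \times C$, define the link $N(b,c) = \{a \in A : (a,b,c) \in E(H)\}$; then $\sum_{(b,c)} |N(b,c)| = |E(H)|$, so the average value of $|N(b,c)|$ is at least $\epsilon s$.

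Next, I would apply Jensen's inequality to the convex function $\binom{\cdot}{\ell}$ together with the switching identity
\[
\sum_{(b,c)\in B\times C}\binom{|N(b,c)|}{\ell} \;=\; \sum_{\substack{A'\subseteq A \\ |A'|=\ell}}\bigl|\{(b,c) : A'\subseteq N(b,c)\}\bigr|,
\]
and average over the $\binom{s}{\ell}$ choices of $A'$ to extract a specific $A^* \subseteq A$ of size $\ell$ such that the bipartite graph $G \subseteq B\times C$ defined by $(b,c)\in E(G) \iff A^*\subseteq N(b,c)$ has at least $s^2\binom{\epsilon s}{\ell}/\binom{s}{\ell} \geq (1-o(1))\,\epsilon^{\ell} s^2$ edges (using $\ell = O(\sqrt{\log s}) = o(s)$ for the final estimate). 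A short arithmetic check shows that for $\ell \leq \alpha\sqrt{\log s}$ one has $\epsilon^{\ell} \geq (2s)^{-1/\ell}$ up to lower-order terms, so $|E(G)| \geq (2s)^{2-1/\ell}$. Applying Theorem~\ref{thm: large hypergraphs1} with $k=2$ to $G$ (viewed as a graph on $2s$ vertices) produces sets $B^* \subseteq B$ and $C^* \subseteq C$ of size $\ell$ each with $B^*\times C^*\subseteq E(G)$; by definition of $G$, every triple in $A^*\times B^*\times C^*$ is then an edge of $H$, yielding the desired $1$-monochromatic subcube of side $\alpha\sqrt{\log s}$.

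The main obstacle I anticipate is constant-matching: one must verify that composing the densities $|E(H)| \geq \epsilon s^3$ and $|E(G)| \geq (2s)^{2-1/\ell}$ yields exactly $\ell^2 \log(1/\epsilon) \leq \log s - o(\log s)$, which is precisely why the proof has to exploit the tripartite structure instead of applying Theorem~\ref{thm: large hypergraphs1} once to the non-partite $3s$-vertex hypergraph. A smaller technical point is that the Jensen step requires the average link size $\epsilon s$ to be at least $\ell$, which holds for sufficiently large $s$ since $\epsilon > 0$ is a fixed constant while $\ell$ only grows as $\sqrt{\log s}$.
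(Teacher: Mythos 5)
The paper gives no explicit derivation of this corollary; it is stated as an immediate consequence of Theorem~\ref{thm: large hypergraphs1}, and the authors presumably had the naive application in mind (view the cube as a $3$-uniform hypergraph on $3s$ vertices and apply the theorem with $n=3s$, $k=3$). Your observation that this naive route loses an additive $\log 27$ in the exponent denominator is correct: it only yields a monochromatic subcube of side roughly $\alpha\sqrt{\log s}/\sqrt{1+\alpha^2\log 27}$, a constant factor below $\alpha\sqrt{\log s}$. The dependent-random-choice reduction you propose is a genuinely smarter route and does recover more of the constant.

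However, the final arithmetic in your proof does not close. You establish $|E(G)|\geq(1-o(1))\epsilon^{\ell}s^{2}$ with $\epsilon=e^{-1/\alpha^2}$ and $\ell=\alpha\sqrt{\log s}$, and correctly note that $\epsilon^{\ell}\geq(2s)^{-1/\ell}$; indeed $\epsilon^{\ell}/(2s)^{-1/\ell}=e^{\log 2/(\alpha\sqrt{\log s})}\to 1$. But this gives only $|E(G)|\geq(1-o(1))(2s)^{-1/\ell}s^{2}=(1-o(1))\cdot\tfrac{1}{4}(2s)^{2-1/\ell}$, a factor of roughly $4$ below the threshold $(2s)^{2-1/\ell}$ that Theorem~\ref{thm: large hypergraphs1} with $k=2$ and $n=2s$ requires; the slack you are counting on vanishes in the limit. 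The step ``so $|E(G)|\geq(2s)^{2-1/\ell}$'' is therefore unjustified as written. One can repair it by shrinking $\ell$ to $\alpha\sqrt{\log s}-c$ for a constant $c$ slightly exceeding $\alpha^2\log 2$, after which the inequality holds and one obtains a monochromatic subcube of side $\alpha\sqrt{\log s}-O(1)$. So your approach is asymptotically sharp (it recovers essentially all of $\alpha$, whereas the naive route only a fraction of it), but neither route literally produces side $\alpha\sqrt{\log s}$; the corollary as worded appears to overstate the constant by a little. This is harmless for the paper, since Claim~\ref{claim: helper for main lemma multy-round triangles} only uses a subcube of side $\Omega(\sqrt{\log s})$, but it is a genuine gap in your proof as written, and you should either shrink $\ell$ as above or explicitly weaken the conclusion to a side of $\Omega_\alpha(\sqrt{\log s})$.
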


To prove Lemma~\ref{lemma: multy-round triangles}, we first prove a weaker claim, which guarantees the existence of a sufficiently large cube of triangles satisfying only the first property in Definition~\ref{def: t-round fooling tripartite}. That is, we prove the existence of a cube $U'\times W'\times X'\subseteq U\times W\times X$ such that for each $u\in U'$, it holds that $W'\times X'$ is a $u^{t+1}$-fooling rectangle of triangles and $|U'|=|W'|=|X'|=\Theta(\sqrt{\log n} )$. Then, applying this claim to the 3 different sides of the cube $U\times W\times X$ gives Lemma~\ref{lemma: multy-round triangles}.
	
	\begin{claim}\label{claim: helper for main lemma multy-round triangles}
		If there is a $t$-fooling cube of triangles $U\times W\times X\subseteq A_1\times A_2\times A_3$ of size $|U|=|W|=|X|=s$, then there is a constant $\gamma$ and a cube $U'\times W'\times X'\subseteq U\times W\times X$ of size $|U'|=|W'|=|X'|=\gamma\sqrt{\log s}$, such that for each $u\in U'$, it holds that $W'\times X'$ is a $u^{t+1}$ fooling rectangle of triangles.
		\begin{proof}
			Fix $u\in U$, and consider the messages that $u$ receive from its neighbors in $W\cup X$ during round $t+1$. Since $B=1$, for each $w\in W$, it holds that there exist $s/2$ nodes $x_1,...,x_{s/2} \in X$ such that $m^{t+1}_{w\rightarrow u}((u,w,x_1))=m^{t+1}_{w\rightarrow u}((u,w,x_2))=...=m^{t+1}_{w\rightarrow u}((u,w,x_{s/2}))$. Denote this specific message by $\tilde{m}^{t+1}_{w\rightarrow u}$. Therefore, defining an indicator variable $Y^{t+1}_{u,w,x}$ for each $(w,x)\in W\times X$ as follows
			\begin{align*}Y^{t+1}_{u,w,x} = \begin{cases}
					1 & \text{if } m^{t+1}_{w\rightarrow u}((u,w,x))=\tilde{m}^{t+1}_{w\rightarrow u}\\
					0 & otherwise
				\end{cases}
			\end{align*}
			gives that
			\begin{align*}
				\sum_{w\in W} \sum_{x\in X} Y^{t+1}_{u,w,x}=s^2/2,
			\end{align*}
			which implies
			\begin{align*}
				\sum_{x\in X} \sum_{w\in W} Y^{t+1}_{u,w,x}=\sum_{w\in W} \sum_{x\in X} Y^{t+1}_{u,w,x}=s^2/2.
			\end{align*}
			
			It follows that there are at least $s/4$ nodes $x_1,...,x_{s/4}$ in $X$, such that each $x_i\in \{x_1,...,x_{s/4}\}$ has a set of nodes $P(x_i)\subseteq W$ of size $s/4$, such that for each $w\in P(x_i)$, it holds that $Y^{t+1}_{u,w,x_i}=1$. Furthermore, for each $x_i$, it holds that there are $|P(x_i)|/2$ nodes $w^{x_i}_1,...,w^{x_i}_{|P(x_i)|/2} \in P(x_i)$ such that $m^{t+1}_{x_i\rightarrow u}((u,w^{x_i}_1,x_i))=...=m^{t+1}_{x_i\rightarrow u}((u,w^{x_i}_{|P(x_i)|/2},x_i))$. Denote this specific message by $\tilde{m}^{1}_{x_i\rightarrow u}$.
			
			For each $(u,w,x)\in U\times W\times X$, let $Z^{t+1}_{u,w,x}$ be an indicator defined as follows:
			\begin{align*}Z^{t+1}_{u,w,x} = \begin{cases}
					1 & \text{if } (m^{t+1}_{w\rightarrow u}((u,w,x))=\tilde{m}^{t+1}_{w\rightarrow u})\wedge (m^{t+1}_{x\rightarrow u}((u,w,x))=\tilde{m}^{t+1}_{x\rightarrow u})\\
					0 & otherwise
				\end{cases}
			\end{align*}
			
			Notice that $Z^{t+1}_{u,w,x}$ is a boolean cube of size $s\times s \times s$, and that the above argument implies that it contains at least $s^3/32$ entries that are $1$.
			
			Therefore, by Corollary~\ref{cor: large hypergraphs}, the cube $Z^{t+1}_{u,w,x}$ contains a $1$-monochromatic subcube such that each side of the subcube is of size $\sqrt{\log s/\log32}$. Denote this subcube by $U'\times W' \times X'$, and the claim follows.
		\end{proof}
	\end{claim}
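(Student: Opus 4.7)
The plan is to fix an arbitrary anchor $u \in U$ and exploit the single-bit bandwidth to control round-$(t+1)$ messages via pigeonhole. Since $B=1$, for each pair $(u,w)$ with $w \in W$, the message $m^{t+1}_{w \to u}((u,w,x))$ is a single bit as a function of $x \in X$, so at least $s/2$ of the $x \in X$ produce the same bit, which I would call the majority bit $\mu(u,w)$. Symmetrically, for each pair $(u,x)$ define $\mu(u,x)$ as the majority of $m^{t+1}_{x \to u}((u,w,x))$ over $w \in W$. I would then bundle both conditions into a single indicator $Z_{u,w,x} \in \{0,1\}$ on the cube $U \times W \times X$, setting $Z_{u,w,x}=1$ precisely when $w$'s message and $x$'s message to $u$ in round $t+1$ both equal their $u$-specific majorities.

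The next step is to lower-bound the density of $1$s in $Z$. For each fixed $u$, an indicator $Y_{u,w,x}$ for agreement with $\mu(u,w)$ alone satisfies $\sum_x Y_{u,w,x} \geq s/2$ for every $w$, hence $\sum_{w,x} Y_{u,w,x} \geq s^2/2$. Swapping summation order and a second pigeonhole yield at least $s/4$ ``good'' values of $x$ for which at least $s/4$ of the $w$'s satisfy $Y_{u,w,x}=1$. Restricting to these good $x$'s and applying the majority argument again to the $x$-to-$u$ messages loses another factor of $2$, leaving at least $s^2/32$ pairs $(w,x)$ with $Z_{u,w,x}=1$ for the fixed $u$. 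Summing over $u \in U$, the cube $Z$ contains at least $s^3/32$ entries equal to $1$, which is a constant density independent of $s$.

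Given this constant density, I would invoke Corollary~\ref{cor: large hypergraphs} with a constant $\alpha$ satisfying $e^{-1/\alpha^2} \leq 1/32$ to extract a $1$-monochromatic subcube $U' \times W' \times X' \subseteq U \times W \times X$ of side length $\gamma \sqrt{\log s}$, for $\gamma = 1/\sqrt{\log 32}$. Inside this subcube, for every $u \in U'$, $w \in W'$, $x \in X'$, both $m^{t+1}_{w \to u}((u,w,x))$ and $m^{t+1}_{x \to u}((u,w,x))$ agree with the $u$-dependent majority bits $\mu(u,w)$ and $\mu(u,x)$, which do not depend on the choice of $x$ or $w$, respectively. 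Combined with the $t$-fooling property inherited from the enclosing cube, this shows that for every $u \in U'$, the messages from each $w \in W'$ to $u$ in rounds $1, \dots, t+1$ are independent of $x \in X'$, and symmetrically for messages from each $x \in X'$ to $u$, establishing exactly that $W' \times X'$ is a $u^{t+1}$-fooling rectangle of triangles.

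The main obstacle I anticipate is the density accounting: the two successive majority steps must compose without dropping the density below a positive constant, since Corollary~\ref{cor: large hypergraphs} converts density into side length only logarithmically, and any factor that shrinks with $s$ would destroy the target $\sqrt{\log s}$ size. A related subtlety is that $\mu(u,w)$ and $\mu(u,x)$ depend on the anchor $u$, so the agreement condition must be tracked as a genuine function on $U \times W \times X$ rather than analyzed anchor-by-anchor; only after packaging everything into a single three-dimensional boolean cube with uniformly high density can Erd\H{o}s's extremal theorem be applied to produce the desired subcube that works simultaneously for every $u \in U'$.
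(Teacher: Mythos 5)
Your proposal is correct and follows essentially the same route as the paper's proof: per-$u$ majority/pigeonhole arguments exploiting $B=1$, a double-counting step producing many good $x$'s with large good sets $P(x)\subseteq W$, packaging both agreement conditions into a single $s\times s\times s$ boolean cube of density at least $1/32$, and extracting a monochromatic subcube of side $\Theta(\sqrt{\log s})$ via Corollary~\ref{cor: large hypergraphs}. One cosmetic slip: for the $s^2/32$ count per $u$-slice to go through, $\mu(u,x)$ must be the majority message within the restricted set $P(x)$ of good $w$'s (as your ``loses another factor of $2$'' step implicitly assumes, and as the paper does), not the majority over all of $W$, since the entries agreeing with the two global majorities could in principle overlap in far fewer pairs.
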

	
\begin{proofof}{Lemma~\ref{lemma: multy-round triangles}}
	To finish the proof of Lemma~\ref{lemma: multy-round triangles}, we apply Claim~\ref{claim: helper for main lemma multy-round triangles} on each of the three sides of the $t$-fooling cube of triangles $U\times W\times X$, and we deduce that there is a constant $\beta$ and a $(t+1)$-fooling cube of triangles $U'\times W'\times X'\subseteq U\times W\times X$ of size $|U'|=|W'|=|X'|=\beta\sqrt{\log\log\log s}$.
\end{proofof}

What we have so far, by Observation~\ref{ob: zero cube} and Lemma~\ref{lemma: multy-round triangles}, is that there are sufficiently many triangles in which the nodes receive the same messages from their neighbors. What remains is to actually capture the two different scenarios of a node participating in a triangle or not, while having the node keep receiving the same messages in both. For this we next discuss 6-cycles.

\subsection{Triangles and 6-Cycles}\label{sec: triangles 6-cycles}

Having defined fooling structures of triangles in the previous section, our final step towards proving Theorem~\ref{thm: main - triangles membership} is presenting a connection between fooling structures of triangles and $6$-cycles. We start with the following definitions.

\begin{definition}\label{def: triangle 6-cycle}
	Let $(u_1,w_1,x_1),(u_2,w_2,x_2)\in A_i\times A_j\times A_k$ be two disjoint triples, for $i\neq j\neq k\in \{1,2,3\}$. The triangle $(u_1,w_1,x_1)$ and the 6-cycle $(u_1,w_1,x_2,u_2,w_2,x_1)$ are called \emph{$(3\leftrightarrow 6,u_1)^t$-fooling}, if $u_1$ receives the same messages from $w_1$ and $x_1$ during each of the first $t$ rounds, in the triangle $(u_1,w_1,x_1)$ and in the 6-cycle $(u_1,w_1,x_2,u_2,w_2,x_1)$. Formally, for each $1\leq i\leq t$,
		\center
		{
		$m^i_{w_1\rightarrow u_1}((u_1,w_1,x_1))=m^i_{w_1\rightarrow u_1}((u_1,w_1,x_2,u_2,w_2,x_1))$\\
		$\wedge m^i_{x_1\rightarrow u_1}((u_1,w_1,x_1))=m^i_{x_1\rightarrow u_1}((u_1,w_1,x_2,u_2,w_2,x_1)).$
		}
\end{definition}

To capture the connection between triangles and 6-cycles, we extend the notion of a $t$-fooling cube of triangles into the notion of a $(3\leftrightarrow 6)^t$-fooling cube.
\begin{definition}\label{def: main cube triangle 6-cycle}
	A cube $U\times W\times X\subseteq A_1\times A_2\times A_3$ is called a \emph{$(3\leftrightarrow 6)^t$-fooling cube}, if:% it satisfies the following properties:
	\begin{enumerate}
		\item The cube $U\times W\times X$ is a $t$-fooling cube of triangles (see Definition~\ref{def: t-round fooling tripartite}).
		\item\label{prop:3-6} For each pair of disjoint triples $(u_1,w_1,x_1),(u_2,w_2,x_2)\subseteq U\times W\times X$ it holds that:
		\begin{enumerate}
			\item\label{def: main property u} The triangle $(u_1,w_1,x_1)$ and the 6-cycle $(u_1,w_1,x_2,u_2,w_2,x_1)$ are $(3\leftrightarrow 6,u_1)^t$-fooling (see Definition~\ref{def: triangle 6-cycle}).
			\item\label{def: main property w} The triangle $(u_1,w_1,x_1)$ and the 6-cycle $(u_1,w_1,x_1,u_2,w_2,x_2)$ are $(3\leftrightarrow 6,w_1)^t$ fooling.
			\item\label{def: main property x} The triangle $(u_1,w_1,x_1)$ and the 6-cycle $(u_1,w_2,x_2,u_2,w_1,x_1)$ are $(3\leftrightarrow 6,x_1)^t$-fooling.
		\end{enumerate}
	\end{enumerate}
\end{definition}

Our goal next is to show that there is a $(3\leftrightarrow 6)^{\Omega(\log^*n)}$-fooling cube $U\times W\times X$ of size $|U|=|W|=|X|\geq 2$. The following observation follows immediately from definition~\ref{def: main cube triangle 6-cycle}.

\begin{observation}\label{ob: first round fooling triangle 6-cycle}
	$A_1\times A_2\times A_3$ is a $(3\leftrightarrow 6)^0$-fooling cube of size $n/3$.
\end{observation}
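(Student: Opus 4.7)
The plan is to verify, one by one, the two clauses of Definition~\ref{def: main cube triangle 6-cycle} in the special case $t=0$ and for the cube $U\times W\times X = A_1\times A_2\times A_3$. The size assertion is immediate: by the definition of the graph family $\mathcal{G}$, every $G\in\mathcal{G}$ satisfies $|A_1|=|A_2|=|A_3|=n/3$, which is exactly the claimed side length.

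For clause~1 of Definition~\ref{def: main cube triangle 6-cycle}, I would simply invoke Observation~\ref{ob: zero cube}, which already states that $A_1\times A_2\times A_3$ is a $0$-fooling cube of triangles; this in turn unfolds through Definitions~\ref{def: t-round fooling tripartite}, \ref{def: t-rounds fooling rectangle}, and~\ref{def: t-rounds fooling set} to a condition on messages in round indices $1\le i\le 0$, an empty range. For clause~2, I would unfold Definition~\ref{def: triangle 6-cycle} with $t=0$: it requires equality of specified messages for every $i$ with $1\le i\le 0$. Since this index range is again empty, the three fooling conditions (2a), (2b), and (2c) of Definition~\ref{def: main cube triangle 6-cycle} are vacuously satisfied for every pair of disjoint triples $(u_1,w_1,x_1),(u_2,w_2,x_2)\subseteq U\times W\times X$ and for each choice of viewpoint among $u_1$, $w_1$, and $x_1$.

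I do not anticipate any genuine obstacle: the observation is a base-case statement meant to seed the inductive construction carried out in Section~\ref{sec: triangles 6-cycles}, and its entire substance is that \emph{no rounds have happened yet, so there is nothing to check}. The only care needed is bookkeeping — making sure that each of Definitions~\ref{def: t-rounds fooling set}, \ref{def: t-rounds fooling rectangle}, \ref{def: t-round fooling tripartite}, \ref{def: triangle 6-cycle}, and~\ref{def: main cube triangle 6-cycle} phrases its message-equality requirement as a universally quantified statement over a round range that is empty when $t=0$, so that the entire conjunction is vacuously true.
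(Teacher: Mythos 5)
Your proposal is correct and matches the paper's (implicit) reasoning: the paper simply asserts that the observation ``follows immediately from Definition~\ref{def: main cube triangle 6-cycle},'' and the substance of that immediacy is exactly what you spell out --- for $t=0$ every message-equality quantifier ranges over the empty set $1\le i\le 0$, so all clauses hold vacuously, while the size claim is built into the definition of the family $\mathcal{G}$. Your bookkeeping through Definitions~\ref{def: t-rounds fooling set}, \ref{def: t-rounds fooling rectangle}, \ref{def: t-round fooling tripartite}, \ref{def: triangle 6-cycle}, and~\ref{def: main cube triangle 6-cycle}, together with the reuse of Observation~\ref{ob: zero cube} for clause~1, is exactly what a fully written-out proof would look like.
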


In general, in life, it is good to know that there is a $(3\leftrightarrow 6)^0$-fooling cube somewhere in the wild. However, as we are interested in a lower bound on the number of rounds, we need to show that the amazing $(3\leftrightarrow 6)^0$-fooling cube contains \emph{sufficiently large} sets of fooling cubes \emph{during many rounds}. For this, we need to prove the following lemma.

\begin{lemma}\label{lemma: main multi-round}
	If there is a $(3\leftrightarrow 6)^t$-fooling cube $U\times W\times X\subseteq A_1\times A_2\times A_3$ of size $|U|=|W|=|X|=s$, then there is a $(3\leftrightarrow 6)^{t+1}$-fooling cube $U'\times W'\times X'\subseteq U\times W\times X$ of size $|U'|=|W'|=|X'|=\Omega(\sqrt{\log\log\log s})$.
\end{lemma}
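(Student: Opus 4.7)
The plan is to show that the $(3\leftrightarrow 6)^{t+1}$-fooling property comes essentially for free once we have a $(t+1)$-fooling cube of triangles sitting inside a $(3\leftrightarrow 6)^t$-fooling cube. Concretely, I would apply Lemma~\ref{lemma: multy-round triangles} to the given cube $U\times W\times X$ to obtain a $(t+1)$-fooling cube of triangles $U'\times W'\times X' \subseteq U\times W\times X$ of size $|U'|=|W'|=|X'|=\beta\sqrt{\log\log\log s}$. Since $U'\times W'\times X'$ is a subset of $U\times W\times X$, it inherits the $(3\leftrightarrow 6)^t$-fooling property, so property~(1) of Definition~\ref{def: main cube triangle 6-cycle} is met and property~(2) already holds at rounds $1,\ldots,t$. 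All that remains is to verify property~(2) at round $t+1$.

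The key step is an ``indirection through an auxiliary triangle'' argument. Fix disjoint triples $(u_1,w_1,x_1),(u_2,w_2,x_2)$ in $U'\times W'\times X'$ and the 6-cycle $C=(u_1,w_1,x_2,u_2,w_2,x_1)$ from Definition~\ref{def: main cube triangle 6-cycle}(2a). To show $m^{t+1}_{w_1\rightarrow u_1}((u_1,w_1,x_1))=m^{t+1}_{w_1\rightarrow u_1}(C)$, I would argue that $w_1$'s round-$t$ view in $C$ coincides with its view in the \emph{auxiliary triangle} $(u_1,w_1,x_2)$, by analyzing the two transcripts $w_1$ receives in $C$ separately. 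The transcript from $u_1$ agrees with the one in the triangle $(u_1,w_1,x_1)$ by the $(3\leftrightarrow 6,u_1)^t$ property (since $u_1$'s neighbors are identical in $C$ and that triangle, so an easy induction on rounds forces $u_1$'s view to be identical), and then equals the transcript in the triangle $(u_1,w_1,x_2)$ by the $(u_1,w_1)^{t+1}$-fooling-set property inherited from the $(t+1)$-fooling cube of triangles. Symmetrically, the transcript from $x_2$ is handled by applying Definition~\ref{def: main cube triangle 6-cycle}(2c) to the auxiliary disjoint pair $(u_2,w_1,x_2),(u_1,w_2,x_1)\subseteq U'\times W'\times X'$, whose 6-cycle coincides with $C$ up to the choice of starting vertex; this shows that $x_2$ sends the same messages to $w_1$ in $C$ as in the triangle $(u_2,w_1,x_2)$, which by the $(x_2,w_1)^{t+1}$-fooling-set property matches the messages in the triangle $(u_1,w_1,x_2)$. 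With both incoming transcripts matched, $w_1$'s round-$(t+1)$ message to $u_1$ in $C$ is exactly the one it sends in the triangle $(u_1,w_1,x_2)$, and a final use of $(w_1,u_1)^{t+1}$-fooling transports this back to the triangle $(u_1,w_1,x_1)$. A parallel argument handles $x_1\rightarrow u_1$, and cyclic relabeling of the focal node yields conditions (2b) and (2c).

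The hard part will be the bookkeeping: verifying that the auxiliary triples $(u_2,w_1,x_2)$ and $(u_1,w_2,x_1)$ really lie inside $U'\times W'\times X'$ (which is immediate from the choice of coordinates, modulo disjointness), and checking that the 6-cycle produced when invoking Definition~\ref{def: main cube triangle 6-cycle}(2c) on this pair is indeed $C$ (which follows since a 6-cycle is determined by its edge set, so rotating the starting vertex produces the same cycle). Because the entire size reduction is absorbed in the single application of Lemma~\ref{lemma: multy-round triangles} and no additional shrinking is needed to secure property~(2), the final cube attains the desired size $\Omega(\sqrt{\log\log\log s})$.
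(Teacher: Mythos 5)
Your proof is correct and follows the same overall architecture as the paper's: invoke Lemma~\ref{lemma: multy-round triangles} to obtain the shrunken $(t+1)$-fooling cube of triangles, note that the $(3\leftrightarrow 6)^t$ properties are inherited by the subcube, and then argue the round-$(t+1)$ condition by showing $w_1$'s view after $t$ rounds is the same in the 6-cycle $C$ and in the auxiliary triangle $(u_1,w_1,x_2)$, so that the inherited $(w_1,u_1)^{t+1}$-fooling-set property transports the round-$(t+1)$ message back to $(u_1,w_1,x_1)$.

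The one place you diverge from the paper is in how you establish that $w_1$'s view in $C$ matches its view in $(u_1,w_1,x_2)$. You analyze the two inbound transcripts to $w_1$ separately: the transcript from $u_1$ via Property~(2a) applied to the original pair (which equates $u_1$'s view in $C$ with its view in $(u_1,w_1,x_1)$) followed by the $(u_1,w_1)^{t+1}$-fooling-set property, and the transcript from $x_2$ via Property~(2c) applied to the rotated pair $(u_2,w_1,x_2),(u_1,w_2,x_1)$ followed by the $(x_2,w_1)^{t+1}$-fooling-set property. The paper gets both transcripts in a single step by invoking Property~(2b) of the $(3\leftrightarrow 6)^t$-fooling cube directly on the \emph{swapped} pair of triples $(u_1,w_1,x_2),(u_2,w_2,x_1)$: Definition~\ref{def: main cube triangle 6-cycle}(2b) applied there literally says the triangle $(u_1,w_1,x_2)$ and the 6-cycle $(u_1,w_1,x_2,u_2,w_2,x_1)=C$ are $(3\leftrightarrow 6,w_1)^t$-fooling, i.e., $w_1$ receives identical messages from both neighbors over rounds $1,\dots,t$ in the two scenarios. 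This skips the trip through $u_1$'s view, the cycle-rotation check, and two of the three extra applications of the $(t+1)$-fooling-set property. Your bookkeeping does check out (the auxiliary triples are indeed in $U'\times W'\times X'$, and the cycle written starting from $u_2$ has the same edge set as $C$), so the argument is sound — just more steps than needed.
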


\begin{proof}
	By the first property of Definition~\ref{def: main cube triangle 6-cycle}, the cube $U\times W\times X$ is a $t$-fooling cube of triangles. Therefore, by
	Lemma~\ref{lemma: multy-round triangles},
	there is a $(t+1)$-fooling cube of triangles $U'\times W'\times X'\subseteq U\times W\times X$ of size $|U'|=|W'|=|X'|=\Omega(\sqrt{\log\log\log s})$. We show that $U'\times W'\times X'$ is also a $(3\leftrightarrow 6)^{t+1}$-fooling cube. That is, we show that each pair of disjoint triples $(u_1,w_1,x_1),(u_2,w_2,x_2)\in U'\times W'\times X'$ satisfies Properties~\ref{def: main property u},~\ref{def: main property w} and ~\ref{def: main property x} of Definition~\ref{def: main cube triangle 6-cycle}.
	
	We start with Property~\ref{def: main property u}. That is, we first prove that for each pair of disjoint triples $(u_1,w_1,x_1),(u_2,w_2,x_2)\in U'\times W'\times X'$ it holds that the triangle $(u_1,w_1,x_1)$ and the 6-cycle $(u_1,w_1,x_2,u_2,w_2,x_1)$ are $(3\leftrightarrow 6,u_1)^{t+1}$-fooling. Observe that since $U'\times W'\times X'$ is a $(3\leftrightarrow 6)^t$-fooling cube, by Property~\ref{def: main property u} of Definition~\ref{def: main cube triangle 6-cycle}, it holds that for each $1\leq i\leq t$,
	\begin{align*}
	&m^{i}_{w_1\rightarrow u_1}((u_1,w_1,x_1))=m^{i}_{w_1\rightarrow u_1}((u_1,w_1,x_2,u_2,w_2,x_1))\\
	\wedge&m^{i}_{x_1\rightarrow u_1}((u_1,w_1,x_1))=m^{i}_{x_1\rightarrow u_1}((u_1,w_1,x_2,u_2,w_2,x_1)).
	\end{align*}
	Therefore, in order to show that the triangle $(u_1,w_1,x_1)$ and the 6-cycle $(u_1,w_1,x_2,u_2,w_2,x_1)$ are $(3\leftrightarrow 6,u_1)^{t+1}$-fooling, it remains to show that the above holds also for $t+1$. That is, we need to show that
	\begin{align*}
	&m^{t+1}_{w_1\rightarrow u_1}((u_1,w_1,x_1))=m^{t+1}_{w_1\rightarrow u_1}((u_1,w_1,x_2,u_2,w_2,x_1))\\
	\wedge&m^{t+1}_{x_1\rightarrow u_1}((u_1,w_1,x_1))=m^{t+1}_{x_1\rightarrow u_1}((u_1,w_1,x_2,u_2,w_2,x_1)).
	\end{align*}
	Observe that since $U'\times W'\times X'$  is a $(t+1)$-fooling cube of triangles, it holds that
	\begin{align}
	&\label{eq: w_1}m^{t+1}_{w_1\rightarrow u_1}((u_1,w_1,x_1))=m^{t+1}_{w_1\rightarrow u_1}((u_1,w_1,x_2))\\
	\wedge&\label{eq: x_1} m^{t+1}_{x_1\rightarrow u_1}((u_1,w_1,x_1))=m^{t+1}_{x_1\rightarrow u_1}((u_1,w_2,x_1)).
	\end{align}
	Furthermore, Since $U'\times W'\times X'$ is also a $(3\leftrightarrow 6)^t$-fooling cube, by Property~\ref{def: main property w} of Definition~\ref{def: main cube triangle 6-cycle}, for the two triples $(u_1,w_1,x_2),(u_2,w_2,x_1)$, it holds that for each $1\leq i\leq t$,
	\begin{align*}
	&m^{i}_{u_1\rightarrow w_1}((u_1,w_1,x_2))=m^{i}_{u_1\rightarrow w_1}((u_1,w_1,x_2,u_2,w_2,x_1))\\
	\wedge&m^{i}_{x_2\rightarrow w_1}((u_1,w_1,x_2))=m^{i}_{x_2\rightarrow w_1}((u_1,w_1,x_2,u_2,w_2,x_1)),
	\end{align*}
	which means that $w_1$ has the same view after $t$ rounds both in the case that it participates in a triangle $((u_1,w_1,x_2))$, and in the case that it participates in a 6-cycle $((u_1,w_1,x_2,u_2,w_2,x_1))$. This, in turn, implies that it sends the same message to $u_1$ during round $t+1$, in these two scenarios, that is:
	\begin{align*}
	&m^{t+1}_{w_1\rightarrow u_1}((u_1,w_1,x_2))=m^{t+1}_{w_1\rightarrow u_1}((u_1,w_1,x_2,u_2,w_2,x_1)).
	\end{align*}
	Combining this with Equation (\ref{eq: w_1}), gives that
	\begin{align*}
	&m^{t+1}_{w_1\rightarrow u_1}((u_1,w_1,x_1))=m^{t+1}_{w_1\rightarrow u_1}((u_1,w_1,x_2,u_2,w_2,x_1)).
	\end{align*}
	
	Similarly, by Property~\ref{def: main property x} of Definition~\ref{def: main cube triangle 6-cycle}, for the two triples $(u_1,w_2,x_1),(u_2,w_1,x_2)$, it holds that for each $1\leq i\leq t$,
	\begin{align*}
	&m^{i}_{u_1\rightarrow x_1}((u_1,w_2,x_1))=m^{i}_{u_1\rightarrow x_1}((u_1,w_1,x_2,u_2,w_2,x_1))\\
	\wedge& m^{i}_{w_2\rightarrow x_1}((u_1,w_2,x_1))=m^{i}_{w_2\rightarrow x_1}((u_1,w_1,x_2,u_2,w_2,x_1)),
	\end{align*}
	which implies that
	\begin{align*}
	&m^{t+1}_{x_1\rightarrow u_1}((u_1,w_2,x_1))=m^{t+1}_{x_1\rightarrow u_1}((u_1,w_1,x_2,u_2,w_2,x_1)).
	\end{align*}
	Combining this with Equation (\ref{eq: x_1}), gives that
	\begin{align*}
	&m^{t+1}_{x_1\rightarrow u_1}((u_1,w_1,x_1))=m^{t+1}_{x_1\rightarrow u_1}((u_1,w_1,x_2,u_2,w_2,x_1)),
	\end{align*}
	which completes the proof that for each pair of disjoint triples $(u_1,w_1,x_1),(u_2,w_2,x_2)\in U'\times W'\times X'$ it holds that the triangle $(u_1,w_1,x_1)$ and the 6-cycle $(u_1,w_1,x_2,u_2,w_2,x_1)$ are $(3\leftrightarrow 6,u_1)^{t+1}$-fooling, i.e., the cube $U'\times W'\times X'$ satisfies Property~\ref{def: main property u} of Definition~\ref{def: main cube triangle 6-cycle}. By symmetric arguments, $U'\times W'\times X'$ also satisfies Properties~\ref{def: main property w} and~\ref{def: main property x} of Definition~\ref{def: main cube triangle 6-cycle}.
\end{proof}

\begin{proofof}{Theorem~\ref{thm: main - triangles membership}}
	Observe that it is sufficient to prove that there is a $(3\leftrightarrow 6)^{\Omega(\log^*n)}$-fooling cube $U\times W\times X\subseteq A_1\times A_2\times A_3$ of size $|U|=|W|=|X|\geq 2$. By Observation~\ref{ob: first round fooling triangle 6-cycle}, $A_1\times A_2\times A_3$ is a $(3\leftrightarrow 6)^0$-fooling cube  of size $n/3$. Furthermore, by Lemma~\ref{lemma: main multi-round}, for any $t\geq 0$, if there is a $(3\leftrightarrow 6)^t$-fooling cube $U^t\times W^t\times X^t$ of size $|U^t|=|W^t|=|X^t|=s$, then there is a  $(3\leftrightarrow 6)^{t+1}$-fooling cube $U^{t+1}\times W^{t+1}\times X^{t+1}\subseteq U^t\times W^t\times X^t$ of size $|U^{t+1}|=|W^{t+1}|=|X^{t+1}|=\Omega(\sqrt{\log\log\log s})$. Therefore, applying Lemma~\ref{lemma: main multi-round} repeatedly $\Omega(\log^*n)$ times implies that there is a $(3\leftrightarrow 6)^{\Omega(\log^*n)}$-fooling cube $U\times W\times X\subseteq A_1\times A_2\times A_3$ of size $|U|=|W|=|X|\geq 2$.
\end{proofof}

\section{Discussion and Open Questions}
\label{sec:discussion}
Being a first-step type of contribution, this work would not be complete without pointing out many additional open questions for which our fooling views technique is a possible candidate as the road for making progress.

First, we raise the question of whether our specification of the triangle membership problem has an inherently different complexity than that of triangle detection. The latter is the standard way of phrasing decision problems in distributed computing (\emph{everyone} outputs \textsc{NO} for a no instance, \emph{someone} outputs \textsc{YES} for a yes instance). We believe that the bound we give for single-round algorithms in Section~\ref{sec:bandwidth} should hold also for triangle detection, but this seems to require a deeper technical analysis. Our lower bound for the number of rounds in \congest{1} does hold also for triangle detection as explained in Section~\ref{sec:OneBit}. Note that the sublinear algorithm of~\cite{IzumiG17} solves triangle detection but it is not clear how to make it solve triangle membership. A closely-related problem is that of triangle listing, where all triangles need to be output. The work of~\cite{IzumiG17} also gives the first sublinear algorithm for triangle listing, completing in $O(n^{3/4}\log{n})$ rounds in the \congestSTD model, as well as an $\Omega(n^{1/3}/\log{n})$ lower bound (see also\cite{Pandurangan16}).
\begin{oq}
\label{oq:detection-membership}
Do the triangle membership, triangle detection, and triangle listing problems have different complexities in the \congestSTD model?
\end{oq}

Our results in this paper are for deterministic algorithms, but we do not see a technical obstacle in making them work for randomized algorithms as well.
\begin{oq}
\label{oq:deterministic-randomized}
Is the deterministic complexity of triangle membership/detection strictly larger than that of its randomized complexity in the \congestSTD model?
\end{oq}

Section~\ref{sec:bandwidth} gives a tight bound on the bandwidth required for an optimal-round algorithm. In Appendix~\ref{sec:kcycle-reduction} we address the bandwidth of optimal-round algorithms for $k$-cycles. This question can be asked about further problems, even those for which we do not know yet what the exact round complexity is, such as various symmetry breaking problems.
\begin{oq}
\label{oq:optimal-round}
What is the \emph{bandwidth complexity} of optimal-round distributed algorithms for various problems?
\end{oq}

Whether gaps between the \local and \congestSTD models occur for symmetry breaking problems is a central open question, which we hope our technique can shed light upon.
Prime examples are MIS and $(\Delta+1)$-Coloring, but it is not known whether there is indeed a gap. The reason that reductions from 2-party communication problems are provably incapable of proving lower bounds for these problems is that any partial solution that is obtained by a greedy algorithm is extendable into a valid solution for the entire graph, which means that one player can solve the problem for its set of nodes and deliver only the state of nodes on the boundary to the other player, for completing the task. Another way to see why arguing about communication only will not suffice here is to notice that simulating the sequential greedy algorithm requires in fact very little communication in total, despite taking many rounds.
\begin{oq}
\label{oq:symmetry-breaking}
For various symmetry breaking problems, is the complexity in the \congestSTD model strictly higher than its counterpart in the \local model?
\end{oq}

%###
\paragraph{Acknowledgements:} We are grateful to Michal Dory, Eyal Kushilevitz, and Merav Parter for stimulating discussions. We are also grateful to Ivan Rapaport, Eric Remila, and Nicolas Schabanel, a point that was made by them helped in significantly simplifying Section~\ref{sec:bandwidth}.
%###

\bibliographystyle{abbrv}
\bibliography{bib}

\appendix

\section{\texorpdfstring{$k$}{k}-Cycle Membership for \texorpdfstring{$k \geq 4$}{k > 3}}
\label{sec:kcycle-reduction}
An immediate question is whether we can get lower bounds on the bandwidth for additional round-optimal algorithms. We show here how to generalize our lower bound technique to apply for detecting membership in larger cycles. However, curiously, as we show in Section~\ref{subsec:kCycleCC} for the sake of comparison, for cycles that are larger than $3$, the standard approach of reductions from 2-party communication complexity problems allows for stronger lower bounds.

An optimal-round algorithm for solving $k$-cycle membership completes within exactly $\floor{(k-1)/2}$ rounds, by a simple (standard) indistinguishability argument (this is even with unlimited bandwidth).
This is because after $t$ rounds of communication a node may have information only about nodes at distance at most $t+1$ from it. Therefore, after $\floor{(k-1)/2}-1$ rounds of communication, a node cannot distinguish whether it participates in a $k$-cycle or not.

To see how we generalize our technique, observe first that the lower bound given in Theorem~\ref{thm: Bandwidth} holds even when a specific node $u$ is given as input to all the nodes, and only $u$ needs to solve the triangle membership problem. This is helpful in extending our lower bound to the case of $k$-cycles when $k \geq 4$, and hence we define this problem formally.

\begin{definition}\label{def:FLTD}(Fixed-Node Triangle Membership).
	In the Fixed-Node Triangle Membership problem, all nodes are given the identity of a specific node $u$, and node $u$ needs to detect whether it is a part of a triangle.
\end{definition}

The proof of Theorem~\ref{thm: Bandwidth} actually proves the following theorem.

\begin{theorem}\label{thm: FLTD}
	The fixed-node triangle membership problem cannot be solved by a single-round algorithm in the \congest{B} model unless $B \geq \log((\frac{n-2}{2(\Delta-1)})^{\Delta-1})$.
\end{theorem}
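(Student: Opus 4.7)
The plan is to observe that the proof of Theorem~\ref{thm: Bandwidth} actually never requires any node other than $u$ to produce an output: it derives a contradiction purely from the fact that $u$ itself cannot distinguish between an instance in which its two neighbors $v,w$ are connected (so $u$ participates in a triangle) and one in which they are not. Since the fixed-node triangle membership problem asks strictly less of the algorithm than ordinary triangle membership (only $u$ must output), and since the lower bound is phrased as an inability of $u$ to decide, the exact same argument gives the same bound. My strategy is therefore to redo the three ingredients of the proof of Theorem~\ref{thm: Bandwidth} (Lemma~\ref{ob: fooling sets}, Lemma~\ref{cor:SuperFND}, and the final indistinguishability step) while explicitly accommodating the extra advice that every node is told the identifier of $u$.

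The first step is to reinterpret $m_{v\rightarrow u}(S)$ in the fixed-node setting: it denotes the $B$-bit message that $v$ sends to $u$ during the single round under the assumption that $N(v)=S\cup\{u\}$ and that $v$ has received the identifier of $u$ as global advice. Crucially, $u$'s identifier is the same across every scenario we will compare, so it is a fixed parameter of the function, not a variable; the only quantities that vary in the adversarial construction are $v$'s neighbor set $S$ and the presence or absence of the edge $\{v,w\}$. With this in hand, the pigeonhole in Lemma~\ref{ob: fooling sets} goes through verbatim: if $B<\log((\tfrac{n-2}{2(\Delta-1)})^{\Delta-1})$, then $v$ has strictly fewer than $|NF_{sets}(v,u)|$ distinct message outputs available, forcing a collision and hence at least $\tfrac{n-2}{2}$ fooling nodes for every $v$. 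The double-counting in Lemma~\ref{cor:SuperFND} then produces the same super fooling node $w^*$ with a large $P_{w^*}$, and the concluding argument exhibits a node $v^*\in P_{w^*}$ such that $u$ receives identical messages from $w^*$ and from $v^*$ whether or not $w^*$ and $v^*$ are connected.

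The main point to verify, and the only spot where one might worry a genuine obstacle hides, is that giving the senders the identifier of $u$ does not somehow equip $v$ with extra expressive power that defeats the counting bound. It cannot: regardless of what function $v$ computes from its local inputs and the advice, the alphabet of outgoing messages is still of size at most $2^B$, and the sets $S$ ranging over $\binom{NF_{nodes}(v,u)}{\Delta-1}$ are independent of the advice. Hence the pigeonhole over candidate neighbor sets still forces at least one $(v,u)$-fooling node, and no additional combinatorial step is required. Once this is observed, every line of the original proof transfers, so Theorem~\ref{thm: FLTD} follows without introducing any new construction, which is precisely why this strengthened form is useful for bootstrapping into the $k$-cycle setting in Section~\ref{sec:kcycle-reduction}.
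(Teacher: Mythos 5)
Your proposal is correct and takes the same approach as the paper, which simply observes that the proof of Theorem~\ref{thm: Bandwidth} derives a contradiction solely from the indistinguishability experienced by the fixed node $u$, so it already establishes the fixed-node variant. Your extra check that global advice about $u$'s identifier cannot defeat the pigeonhole is sound, and in fact moot: in $KT_1$ the relevant senders $v,w$ are neighbors of $u$ and already know $u$'s ID, while the counting argument only depends on the cardinality $2^B$ of the message alphabet versus the number of candidate neighbor sets, neither of which changes.
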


We formally extend the membership problem to larger cycles, as follows.
\begin{definition}\label{def: L4C}(k-Cycle Membership).
	In the $k$-Cycle Membership problem, each node needs to detect whether it is a part of a $k$-cycle.
\end{definition}

Now, Theorem~\ref{thm: FLTD} can be used to prove the following.

\begin{theorem}\label{thm:4-cycle}
	Let $k\in O(n^{1-\epsilon})$, for some constant $0<\epsilon<1$. The $k$-cycle membership problem cannot be solved by an optimal $\floor{(k-1)/2}$-round algorithm in the \congest{B} model unless $B \geq c\log((\frac{n-2}{2(\Delta-1)})^{\Delta-1})$, for some constant $c\geq 0$.
\end{theorem}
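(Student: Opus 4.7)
The plan is to reduce fixed-node triangle membership (for which Theorem~\ref{thm: FLTD} already gives the desired bandwidth lower bound) to $k$-cycle membership via an edge-subdivision construction. First I would take an arbitrary instance $G$ of fixed-node triangle membership with designated node $u$ and maximum degree $\Delta$, and construct $G'$ from $G$ by subdividing every edge $(x,y)\in E(G)$ not incident to $u$ into a simple path of length $k-2$ (inserting $k-3$ fresh subdivision nodes per edge), while leaving $u$'s incident edges as they are. The maximum degree is preserved (new nodes have degree $2$), and since $k=O(n^{1-\epsilon})$ the size blow-up is only polynomial, so $\log|V(G')|=\Theta(\log|V(G)|)$ and the quantities $\log((\frac{n-2}{2(\Delta-1)})^{\Delta-1})$ evaluated on $G$ and on $G'$ agree up to a constant factor.

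Correctness of the encoding follows from a short length argument: every $k$-cycle through $u$ in $G'$ consists of the two $u$-incident original edges of length $1$ plus $m$ subdivided paths of length $k-2$, and $2+m(k-2)=k$ forces $m=1$; conversely, any triangle $(u,v,w)$ of $G$ becomes a cycle of length $1+(k-2)+1=k$ through $u$ in $G'$. Thus $u$ is in a triangle in $G$ iff $u$ is in a $k$-cycle in $G'$. Given an optimal-round algorithm $\mathcal{A}$ for $k$-cycle membership in \congest{B} on $G'$, I would simulate $\mathcal{A}$ inside $G$: each node $v$ of $G$ plays the role of itself in $G'$ together with the near half of each subdivided path incident to $v$. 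The single boundary $G'$-edge of every subdivided path is carried by the corresponding $G$-edge $(v,w)$ using bandwidth $B$, while all other $G'$-edges along the path are internal to one $G$-node's simulation. One $G$-round then implements one round of $\mathcal{A}$, yielding a $\floor{(k-1)/2}$-round algorithm for fixed-node triangle membership on $G$ with bandwidth $B$.

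The hard part will be bridging between this $\floor{(k-1)/2}$-round algorithm on $G$ and Theorem~\ref{thm: FLTD}, which directly bounds only \emph{single-round} algorithms. To close the gap I would exploit the structure of Theorem~\ref{thm: FLTD}'s hard instance: the node $u$ has the same two neighbors $v,w$ in both the triangle and the non-triangle scenarios, so $u$'s own outgoing messages throughout $\mathcal{A}$ are identical in the two cases; only the messages sent by $v$ and by $w$ can distinguish the scenarios, and each such message is determined by that node's local $G$-view. An inductive extension of the counting arguments of Lemmas~\ref{ob: fooling sets} and~\ref{cor:SuperFND} to the $\floor{(k-1)/2}$-round setting --- essentially showing that the concatenation of the $\floor{(k-1)/2}$ messages across rounds cannot index more than a constant factor as many possibilities as a single round can on these instances --- then yields the claimed bandwidth lower bound $B\ge c\log((\frac{n-2}{2(\Delta-1)})^{\Delta-1})$, where the constant $c>0$ absorbs the multiplicative losses from the reduction and from the parameter translation between $G$ and $G'$.
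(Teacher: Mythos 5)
Your construction subdivides the \emph{wrong} edges, and this is not a cosmetic difference: it breaks the reduction. You subdivide every edge not incident to $u$ into a path of length $k-2$ and leave $u$'s incident edges alone, so in $G'$ the node $u$ still has the same two neighbors $v,w$ as in $G$, at distance one. Your simulation is therefore round-for-round: one $G$-round implements one round of $\mathcal{A}$, and an optimal $\floor{(k-1)/2}$-round $k$-cycle algorithm on $G'$ only gives you a $\floor{(k-1)/2}$-round algorithm for fixed-node triangle membership on $G$. You correctly flag that Theorem~\ref{thm: FLTD} governs only single-round algorithms, but your proposed bridge --- an ``inductive extension'' showing the concatenated $\floor{(k-1)/2}$ rounds of messages ``cannot index more than a constant factor as many possibilities as a single round'' --- is asserted rather than argued, and is in fact false as a counting statement: over $r$ rounds $v$ transmits $rB$ bits to $u$, which a priori index $2^{rB}$ possibilities, so the counting argument of Lemma~\ref{ob: fooling sets} would only yield $B=\Omega(\frac{1}{k}\log(\cdot))$, not $B\geq c\log(\cdot)$ for a constant $c$. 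The interactive, three-party structure (where $v$'s later messages depend on $u$'s responses, which depend on $w$'s messages) makes a direct multi-round adaptation of that lemma nontrivial and is left entirely unaddressed.

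The paper instead subdivides the edges \emph{incident} to $u$, replacing each $(u,v)\in E(u)$ by a path through $(k-3)/2$ fresh nodes $uv_1,\dots,uv_{(k-3)/2}$ (and, for even $k$, additionally subdividing the remaining edges once). The payoff is that $u$'s $G'$-neighbor $uv_1$ sits at distance $(k-3)/2$ from $v$, so the message $uv_1$ sends to $u$ in round $\floor{(k-1)/2}$ of $\mathcal{A}$ depends on $v$ only through the single message $v$ sends to $uv_{(k-3)/2}$ in round $1$. Hence $u$ can receive that one $B$-bit message from each original neighbor in a \emph{single} $G$-round and then locally simulate the entire chain of subdivision nodes by backwards induction, recovering all $\floor{(k-1)/2}$ rounds of $\mathcal{A}$'s messages to $u$. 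This collapses the round complexity to $1$, so Theorem~\ref{thm: FLTD} applies directly without any multi-round counting argument. Your length argument ($2+m(k-2)=k$ forces $m=1$) correctly certifies the ``triangle iff $k$-cycle'' equivalence for your graph, and the degree/namespace bookkeeping is fine; the missing idea is that the subdivision must land on $u$'s edges precisely so the simulated algorithm's rounds telescope into $u$'s local computation.
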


We prove Theorem~\ref{thm:4-cycle} by showing a reduction from the fixed-node triangle membership problem, given in Definition~\ref{def:FLTD}. That is, we show that an algorithm for solving the $k$-cycle membership problem can be used to solve the fixed-node triangle membership problem.

\begin{proofof}{Theorem~\ref{thm:4-cycle}}
	First, we show how to construct an appropriate instance for the $k$-cycle membership problem, given an instance for the fixed-node triangle membership problem. We start with describing the construction for an odd value of $k$, and then show how to tweak it to handle even  values of $k$ as well.
	
	Let $(G=(V,E),u)$ be an instance of the fixed-node triangle membership problem where $|V|=\widetilde{n}$, and let $E(u)$ be the set of edges incident to the node $u$. For an odd $k$, we define an instance of the $k$-cycle membership problem, $G'=(V',E')$, where $|V'|=n$, as follows. We replace each edge $(u,v)\in E(u)$ with a path $P_{uv}$ of length $(k-3)/2+1$, going through $(k-3)/2$ new nodes, denoted $uv_1,\dots, uv_{(k-3)/2}$, containing the edges $(u,uv_1), (uv_1,uv_2), \dots, (uv_{(k-3)/2-1},uv_{(k-3)/2}),(uv_{(k-3)/2},v)$.
	For example, for $k=7$, we replace each edge $(u,v)$ by a path of three edges $(u,uv_1),(uv_1,uv_2), (uv_2,v)$, going through two new intermediate nodes $uv_1$ and $uv_2$ (see Figure~\ref{fig: 4-cycle}).
	\begin{figure}
		\centering
		\includegraphics[scale=0.6, trim={0 6.5cm 9cm 7cm}, clip]{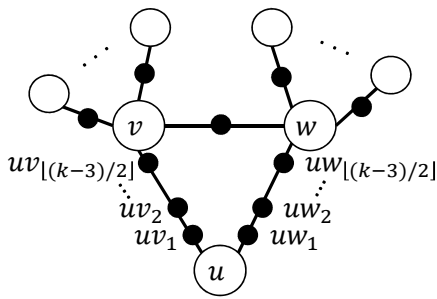}
		\caption{Constructing $G'$ out of $G$.}
		\label{fig: 4-cycle}
	\end{figure}
	Formally, the new graph $G'=(V',E')$ is defined as:
	\begin{align*}
		&V'=V\cup\{uv_i \mid (u,v)\in E(u), 1 \leq i \leq (k-3)/2\}\\
		&E'=(E\setminus E(u))\cup\{(u,uv_1)),(uv_1,uv_2), \dots, (uv_{(k-3)/2-1},uv_{(k-3)/2}), (uv_{(k-3)/2},v)\mid (u,v)\in E(u)\}\\
	\end{align*}	
	The following observation follows directly from the construction above.
	\begin{observation}\label{ob: triangle-4cycle reduction}
		For an odd $k$, the node $u$ participates in a $k$-cycle in $G'$ if and only if it participates in a triangle in $G$.
	\end{observation}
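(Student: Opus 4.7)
The plan is to prove both directions of the biconditional by careful bookkeeping of path lengths, using the fact that every intermediate vertex $uv_i$ has degree exactly $2$ in $G'$, so any simple cycle passing through an intermediate vertex must traverse the entire subdivided path it lies on.

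For the forward direction, I would start from a triangle $(u,v,w)$ in $G$ and explicitly exhibit a $k$-cycle in $G'$: follow the subdivided path $P_{uv}$ from $u$ to $v$ (contributing $(k-1)/2$ edges), then use the edge $(v,w)$, which is preserved in $G'$ because it lies in $E\setminus E(u)$ (contributing $1$ edge), then follow the subdivided path $P_{uw}$ from $w$ back to $u$ (contributing another $(k-1)/2$ edges). The total length is $(k-1)/2+1+(k-1)/2=k$, and simplicity follows because the three pieces only share their endpoints.

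The reverse direction is the more delicate step. Suppose $C$ is a simple $k$-cycle in $G'$ containing $u$. The two edges of $C$ incident to $u$ must be of the form $(u,uv_1)$ and $(u,uv'_1)$ for some $v,v'\in N_G(u)$, since in $G'$ the only neighbors of $u$ are first-subdivision vertices. Because every intermediate vertex on any $P_{uv}$ has only its two path-neighbors in $G'$, once $C$ enters $P_{uv}$ at $u$ it cannot exit until it reaches $v$, consuming exactly $(k-1)/2$ edges; symmetrically for the other end. Thus $C$ has the shape: path $P_{uv}$, then a subpath $Q\subseteq E\setminus E(u)$ from $v$ to $v'$ in $G$, then path $P_{uv'}$. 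The length of $Q$ must therefore be exactly $k-2\cdot(k-1)/2=1$, so $Q$ is a single edge $(v,v')\in E$ with $v,v'\in N_G(u)$, which means $\{u,v,v'\}$ is a triangle in $G$.

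The main obstacle, such as it is, is ruling out ``wilder'' cycles that wander through $E\setminus E(u)$ for several steps before returning to a subdivided path; the key observation that handles this is the length accounting, namely that any cycle closing up at $u$ must use two subdivided paths of total length $k-1$, leaving only one edge of $E\setminus E(u)$ available. Once this is in place, the biconditional follows immediately.
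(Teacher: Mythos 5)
Your proof is correct. The paper states this observation without proof (``follows directly from the construction''), and your argument---explicitly exhibiting the $k$-cycle $P_{uv}\cdot(v,w)\cdot P_{uw}$ in the forward direction, and in the reverse direction using the degree-$2$ property of the subdivision vertices to force the cycle to traverse two full subdivided paths, whence the length accounting $k-2\cdot\tfrac{k-1}{2}=1$ leaves exactly one edge of $E\setminus E(u)$ between two neighbors of $u$---is precisely the natural justification the authors have in mind.
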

	
	Since an algorithm may use the IDs of the nodes, we need to also assign unique IDs to nodes in $G'$. We can do this in any consistent arbitrary manner, say, by assigning $ID_{G'}(x)=ID_{G}(x)$ if $x \in V$, and $ID(uv_{i}) = ID(v)\circ bin(i)$ for nodes in $V'\setminus V$, where $bin(i)$ is the binary representation of $i$. Notice that $\Delta(G')=\Delta(G)$.

	Next, assume towards a contradiction, that there is an algorithm $A'$ that solves the $k$-cycle membership problem in an optimal number of $(k-1)/2$ rounds in the \congest{B} model with $B < \log((\frac{\widetilde{n}-2}{2e(\Delta-1)})^{\Delta-1})$. We show an algorithm $A$ that solves the fixed-node triangle membership problem on $(G=(V,E),u)$ in a single round of the \congest{B} model with $B < \log((\frac{\widetilde{n}-2}{2e(\Delta-1)})^{\Delta-1})$. As this contradicts Theorem~\ref{thm: FLTD}, and $n=k\widetilde{n}$, this completes the proof for all values of $k$ in $O(n^{1-\epsilon})$ for any constant $0\leq\epsilon\leq 1$.
	
	We construct $A$ such that nodes in $G$ simulate the nodes in $G'$ running algorithm $A'$, as follows. In $A$, each node in $V$, sends to its neighbors in $G$ the message it sends in the first round of $A'$ on $G'$. If $v \in V$ is a neighbor of $u$ in $G$, then it sends to $u$ the message it sends to $uv_{(k-3)/2}$ in $A'$.
	Then, for each of its neighbors $v$, the node $u$ has all the messages that the node $uv_{(k-3)/2}$ receives in the first round of $A'$. Now, by backwards induction, for each $i$, $1\leq i \leq (k-3)/2$, by local simulation at the node $u$, it knows the view of the node $uv_{i}$ at the end of the first $(k-3)/2-i+1$ rounds of $A'$. This implies that $u$ knows the message sent to it by $uv_{1}$ in round $(k-3)/2+1=(k-1)/2$ of $A'$. Since in $(k-1)/2$ rounds of $A'$ the node $u$ knows whether it is in a $k$-cycle in $G'$, by Observation~\ref{ob: triangle-4cycle reduction}, it thus knows in a single round whether it is in a triangle in $G$.
	
To handle even value of $k$, we construct $G'$ in a similar manner of replacing each edge $(u,v) \in E(u)$ by a path of $(k-4)/2$ nodes. In addition, each edge $(v,w) \in E\setminus E(u)$, is replaced by a path of length two which consists of an additional node $vw$. Similarly to Observation~\ref{ob: triangle-4cycle reduction}, the node $u$ participates in a $k$-cycle in $G'$ if and only if it participates in a triangle in $G$. As in the case for odd $k$, given an algorithm $A'$ for $k$-cycle membership in $G'$, the simulation of $(k-4)/2+1=(k-2)/2=\floor{(k-1)/2}$ rounds of it in $G$ requires only a single round of communication. Therefore, the reduction carries over for even values of $k$ as well. Observe that here $n=k\widetilde{n}+\widetilde{n}^2$, therefore, as in the case of odd $k$, we achieve the same asymptotic lower bound as in the triangle membership problem, for any value of $k$ in $O(n^{1-\epsilon})$.
\end{proofof}

\subsection{\texorpdfstring{$k$}{k}-Cycle membership for \texorpdfstring{$k \geq 4$}{k>3} using communication complexity}
\label{subsec:kCycleCC}

Here we show a lower bound on the bandwidth needed for any optimal-round algorithm for solving $k$-cycle membership for $k \geq 4$ by using the standard framework of reduction from a 2-party communication complexity problem. For simplicity, we will show this for even values of $k$, but a similar construction works for odd values as well. For $k>4$, the bound is larger compared with our proof of Section~\ref{sec:kcycle-reduction}.

\begin{theorem}\label{thm: k-cycle}
	The $k$-cycle membership problem cannot be solved by a deterministic optimal-round algorithm in the \congest{B} model unless $B \geq \Omega(\Delta^{\frac{k-4}{2}+1}\log(n))$, and it cannot be solved by a randomized optimal-round algorithm, which succeeds with high probability\footnote{We say that an event occurs with high probability if it occurs with probability $1-\frac{1}{n^c}$, for some constant $c\geq 1$.}, in the \congest{B} model unless $B \geq \Omega(\Delta^{\frac{k-4}{2}+1})$, for any integers $n,\Delta,k$ such that
		$$(\Delta-1)^{\frac{k-4}{2}+1}=O(n^{1-\epsilon}),$$
	for some constant $0<\epsilon<1.$
\end{theorem}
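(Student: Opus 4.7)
The plan is to prove Theorem~\ref{thm: k-cycle} via the standard information-bottleneck technique: reduce from 2-party set disjointness. For sets $S_A, S_B \subseteq [N]$, I would construct a graph $G$ of max degree $\Delta$ on $\Theta(N)$ nodes containing a distinguished node $u$, such that (i) $u$ participates in a $k$-cycle iff $S_A \cap S_B \neq \emptyset$, and (ii) the edges whose presence depends on both $S_A$ and $S_B$---the ``cut'' $C$ between the Alice-controlled and Bob-controlled subgraphs---has $|C| = O(1)$. Any $r = \lfloor (k-1)/2\rfloor$-round \congest{B} algorithm solving $k$-cycle membership can then be simulated in a 2-party fashion: Alice and Bob locally simulate their own sides and exchange, per round, the $B$-bit messages on the $O(1)$ cut edges, for a total of $O(r \cdot |C| \cdot B) = O(k \cdot B)$ bits. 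The disjointness lower bound of $\Omega(N)$ (randomized) then gives the bandwidth bound.

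Focusing on even $k = 2k'$, my intended gadget consists of two ``path-tree'' sides meeting at a bridge: on Alice's side, a tree of depth $k' - 1$ and branching $\Delta - 1$ rooted at a neighbor of $u$, with $(\Delta - 1)^{k'-1}$ leaves labeled by $[N]$, and symmetrically on Bob's side. A fixed constant-size bridge sub-gadget connects the two sides, and each element $i \in [N]$ corresponds to a unique closed walk from $u$ through Alice's tree, across the bridge, through Bob's tree, back to $u$, of total length exactly $k$. The set $S_A$ determines which leaves on Alice's side are ``active''---e.g., via local stub sub-gadgets strictly on Alice's side---and symmetrically for $S_B$; the length-$k$ cycle through element $i$ is realized iff $i \in S_A \cap S_B$. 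The hypothesis $(\Delta - 1)^{(k-4)/2+1} = O(n^{1-\epsilon})$ ensures that the $\Theta(N)$ nodes fit within the $n$-vertex budget. For odd $k$, an asymmetric choice of tree depths ($k'$ on one side, $k' - 1$ on the other) handles the parity, along the same lines as the odd-$k$ tweak in the proof of Theorem~\ref{thm:4-cycle}.

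Given the gadget, the randomized bound $B = \Omega(\Delta^{(k-4)/2+1})$ follows immediately from the $\Omega(N) = \Omega(\Delta^{k'-1})$ randomized disjointness lower bound combined with the $O(k\cdot B)$ simulation. For the deterministic bound, I would enlarge the reduction to $N = \Theta(\Delta^{k'-1} \log n)$ by placing $\Theta(\log n)$ parallel independent copies of the gadget sharing a single bridge structure, and reduce from the $\log n$-fold direct sum of disjointness, whose deterministic communication complexity is $\Omega(N \log n)$ by standard direct-sum arguments for disjointness. Keeping $|C| = O(1)$ by routing all copies through the shared bridge then yields $B = \Omega(\Delta^{(k-4)/2+1} \log n)$. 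The enlarged $N$ still satisfies $N = o(n)$ under the stated hypothesis, so the graph fits within the namespace.

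The main obstacle, in my view, is the gadget's precise design: ensuring that $|C|$ stays $O(1)$ while encoding $N = \Omega(\Delta^{k'-1})$ elements is nontrivial, since a naive ``one cut edge per element'' bridge would give $|C| = \Omega(N)$, making the resulting bandwidth bound vacuous. The resolution is to have each side encode its set \emph{entirely} in local (within-side) structure---stub edges attached near the tree leaves---and to route all cycle-closing paths through a fixed constant-degree bridge gadget that is shared across all elements. One must then carefully verify that $k$-cycles through $u$ correspond exactly to intersections of $S_A$ and $S_B$: no shorter cycle can sneak in through the tree structure, and no spurious length-$k$ cycle is created at the tree-bridge interface. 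Combined with the odd-$k$ case and the parallel-copies construction needed for the deterministic bound, this path-length accounting is where the bulk of the technical work sits.
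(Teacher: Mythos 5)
Your high-level approach---reducing from two-party disjointness to bound $B$ for optimal-round algorithms---is the same as the paper's, and the tree-gadget idea is in the right spirit. However, there are two substantive differences, one of which is a genuine gap.

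The gap concerns the simulation cost. You propose that Alice and Bob exchange, per round, the $B$-bit messages on $O(1)$ cut edges, for a total of $O(r\cdot|C|\cdot B) = O(k\cdot B)$ bits. This loses a factor of $k$ relative to the claimed bound, and $k$ can be as large as $\Theta(\log n)$ when $\Delta = O(1)$ (since the constraint $(\Delta-1)^{(k-4)/2+1} = O(n^{1-\epsilon})$ allows it), so the loss is not negligible. The paper's proof crucially avoids this: the gadget is designed so that $u$'s two neighbors $v$ and $w$ are each the root of a $(\Delta-1)$-ary tree of depth $(k-4)/2$, and the input-dependent edges sit at the leaves. Because the $(k/2-2)$-hop neighborhoods of $u$, $v$, $w$ are \emph{fixed} (input-independent), the messages $m^i_{v\to u}$ and $m^i_{w\to u}$ for $i < k/2-1$ do not depend on the inputs at all; only the final-round messages $m^{k/2-1}_{v\to u}$ and $m^{k/2-1}_{w\to u}$ do, and each of these depends on only one player's input. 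So Alice and Bob simulate all but the last round locally for free and exchange exactly two $B$-bit messages, giving $2B \geq CC(f)$ with no factor-$k$ loss. Without this observation the theorem as stated does not follow from your simulation.

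The second difference is in gadget design and in the choice of communication problem. You want a constant-size bridge shared across all elements, with the sets encoded entirely by local stubs. It is not clear how to realize this while keeping max degree $\Delta$: all $(\Delta-1)^{k/2-1}$ leaves on each side need a cycle-closing route through the bridge, so a constant-size bridge would have unbounded degree, and the promised ``path-length accounting'' to avoid spurious cycles is where you yourself flag the work. The paper sidesteps this entirely by \emph{not} keeping the shared structure small: it uses a path $P$ of $\Theta(n)$ nodes as the universe, and Alice and Bob each connect the $(\Delta-1)^{(k-4)/2}$ leaves of their respective trees to their input nodes on $P$ (with $\Delta-1$ input nodes per leaf). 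A $k$-cycle through $u$ exists iff some $P$-node is chosen by both, which is exactly disjointness. The cut is \emph{not} $O(1)$ in the usual sense, but that does not matter given the efficient simulation above. Finally, for the deterministic bound the paper reduces from $S$-Disjointness with $S = (\Delta-1)^{(k-4)/2+1}$ and $K = \Theta(n)$ and invokes the $\Omega\bigl(\log\binom{K}{S}\bigr)$ deterministic lower bound, which directly yields the extra $\log n$ factor; this is considerably simpler and more self-contained than the direct-sum route you sketch, which would require a direct-sum theorem for deterministic disjointness and careful bookkeeping of how the $\log n$ copies share the bridge without blowing up degree or creating cross-copy cycles.
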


In order to prove Theorem~\ref{thm: k-cycle}, we show a reduction from the $2$-party communication complexity problem $DISJ_{K,S}$.

A 2-party communication complexity problem~\cite{KushilevitzNBook} consists of a function $f:\set{0,1}^K\times\set{0,1}^K\to\set{\true,\false}$, and two strings, $x,y\in\set{0,1}^K$, that are given as inputs for two players, Alice and Bob, respectively. The players exchange bits of communication in order to compute $f(x,y)$, according to a protocol $\pi$. The \emph{communication complexity} $CC(\pi)$ of a protocol $\pi$ for computing $f$ is the maximal number of bits, taken over all input pairs $(x,y)$, exchanged between Alice and Bob. The \emph{communication complexity} $CC(f)$ of $f$ is the minimum, taken over all  protocols $\pi$ that compute $f$, of $CC(\pi)$.

In the \emph{S-Disjointness} problem ($DISJ_{K,S}$), each of the players Alice and Bob receives a $K$-bits input string containing exactly $S$ ones, and the function $f$ is $DISJ_{K,S}(x,y)$, whose output is $\false$ if there is an index $i\in \{0,...,K-1\}$ such that $x_i=y_i=1$, and $\true$ otherwise. Observe that for $S>\frac{K}{2}$, the function is constant. For $S\leq \frac{K}{2}$ The deterministic communication complexity of $DISJ_{K,S}$ is known to be $\Omega(\log \binom{K}{S})$\cite{Jukna,KushilevitzNBook}\footnote{It can be proved by the rank method for proving lower bounds for deterministic protocols in communication complexity. To read more about the rank method, see for example~\cite{KushilevitzNBook} section 1.4. For the proof of the lower bound on the rank of $S$-Disjointness, see~\cite{Jukna}, page 175.}, while its randomized communication complexity is known to be $\Omega(S)$\cite{Razborov92,HastadW07}.
For the reduction, we adapt the formalization of \emph{Family of Lower Bound Graphs} given in~\cite{Censor-HillelKP17} to our setting.

\begin{definition}[\textbf{Definition 1 (simplified) of~\cite{Censor-HillelKP17}: Family of Lower Bound Graphs}]%\newline
	\label{def:family}
	Fix an integer $K$, a function $f:\set{0,1}^K\times\set{0,1}^K\to\set{\true,\false}$. The family of graphs $\{G_{x,y}=(V,E_{x,y})\mid x,y\in\set{0,1}^K\}$, is said to be a family of \emph{lower bound graphs w.r.t. $f$ and k-cycle membership} if the following properties hold:
	\begin{enumerate}
		\item[(1)] The set of nodes $V$ is the same for all graphs, and we denote by $V=\{u\}\dot\cup V_A\dot\cup V_B\dot\cup \widetilde{V}$ a fixed partition of it;
		\item[(2)] Only the existence of edges in $V_A\times \widetilde{V}$ may depend on $x$;
		\item[(3)] Only the existence of edges in $V_B\times \widetilde{V}$ may depend on $y$;
		\item[(4)] The node $u$ participates in a $k$-cycle in $G_{x,y}$ iff $f(x,y)=\false$.
	\end{enumerate}
\end{definition}

Observe that given a family of lower bound graphs $\{G_{x,y}=(V,E_{x,y})\mid x,y\in\set{0,1}^K\}$ w.r.t. to $DISJ_{K,S}$ and $k$-cycle membership, if Alice and Bob can simulate an algorithm for $k$-cycle membership on $u$, then by checking the output of $u$ at the end of the algorithm they can solve $DISJ_{K,S}(x,y)$.

The proof of Theorem~\ref{thm: k-cycle} is organized as follows. First, we construct a family of lower bound graphs, and next, we show that given an algorithm $ALG$ for $k$-cycle membership with messages of size $B$, Alice and Bob can simulate $ALG$ on $G_{x,y}$ by exchanging only $O(B)$ bits.

We now construct the following family of lower bound graphs, by describing a fixed graph construction $G=(V,E)$, which we then generalize to a family of graphs $\{G_{x,y}=(V,E_{x,y})\mid x,y\in\set{0,1}^K\}$, which we show to be a family lower bound graphs w.r.t. to $DISJ_{K,S}$ and $k$-cycle membership.

~\\
\noindent\textbf{The fixed graph construction:}
The fixed graph construction (Figure~\ref{fig: CCreduction}) consists of a tree $T$ and a path $P$ of size $n-|T|$. The tree $T$ is a tree in which the root node of $T$, denoted by $u$, is connected to two nodes $w$ and $v$, such that each of $w$ and $v$ is a root of a $(\Delta-1)$-regular tree of depth $\frac{k-4}{2}$. Denote by $leaves(w)$ and $leaves(v)$ the set of leaves of the tree rooted at $w$ and the set of leaves of the tree rooted at $v$, respectively. We define $V_A$ and $V_B$ to be $leaves(w)$ and $leaves(v)$ respectively.

\begin{figure}
	\centering
	\includegraphics[scale=0.6, trim={5cm 6.5cm 5cm 7cm}, clip]{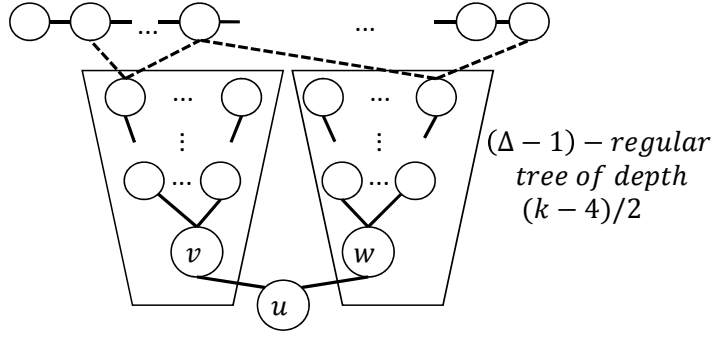}
	\caption{The reduction from Communication Complexity to $k$-cycle membership.}
	\label{fig: CCreduction}
\end{figure}

~\\
\noindent\textbf{Adding edges corresponding to the inputs:} Each of the players Alice and Bob receives as the input a set of nodes in $P$ of size $(\Delta-1)^{\frac{k-4}{2}+1}$. Alice connects the nodes in her input to the $(\Delta-1)^{\frac{k-4}{2}}$ leaves of the tree rooted at $v$, such that each leaf is connected to $\Delta-1$ nodes in $P$. Similarly, Bob connects the nodes in his input to the $(\Delta-1)^{\frac{k-4}{2}}$ leaves of the tree rooted at $w$, such that each leaf in connected to $\Delta-1$ nodes in $P$. The following observation follows directly from the construction.

\begin{observation}\label{ob: k-cycle}
	The node $u$ participates in a cycle of length $k$ in $G_{x,y}$ if and only if the two sets of Alice and Bob are not disjoint.
\end{observation}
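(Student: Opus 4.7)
The plan is to verify the biconditional by a direct structural analysis of the simple cycles through $u$ in $G_{x,y}$. The key fact is that $u$ has only two neighbors, $w$ and $v$, so every simple cycle through $u$ uses both of the edges $(u,w)$ and $(u,v)$; equivalently, such cycles correspond bijectively to simple $w$-$v$ paths in $G_{x,y}\setminus\{u\}$, the cycle being exactly $2$ edges longer than the path. Furthermore, because the two $(\Delta-1)$-regular trees rooted at $w$ and $v$ are acyclic and the only edges leaving their vertex sets (other than $(u,w)$ and $(u,v)$) go from $leaves(w)$ and $leaves(v)$ into $P$, any such $w$-$v$ path must descend from $w$ to some $\ell_w\in leaves(w)$, cross to some $p_1\in P$, traverse the unique subpath of $P$ to some $p_2\in P$, cross to some $\ell_v\in leaves(v)$, and ascend to $v$.

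For the $(\Leftarrow)$ direction, I will assume some $p\in P$ lies in both Alice's and Bob's input sets. By construction, $p$ then has an edge to some $\ell_v\in leaves(v)$ (from the edges added on Alice's behalf) and to some $\ell_w\in leaves(w)$ (from the edges added on Bob's behalf). Concatenating $u,v,\text{(descent to }\ell_v\text{)},p,\ell_w,\text{(ascent to }w\text{)},u$ yields a simple cycle of length exactly $1+\frac{k-4}{2}+1+1+\frac{k-4}{2}+1=k$.

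For the $(\Rightarrow)$ direction, I will take any $k$-cycle $C$ through $u$ and apply the structural description above. Summing the seven segment lengths gives $|C|=1+\frac{k-4}{2}+1+d_P(p_1,p_2)+1+\frac{k-4}{2}+1=k+d_P(p_1,p_2)$, so $|C|=k$ forces $d_P(p_1,p_2)=0$, i.e.\ $p_1=p_2=:p$. Then $p$ being connected to a leaf in $leaves(w)$ places it in Bob's input, and being connected to a leaf in $leaves(v)$ places it in Alice's input, so the two sets intersect. The one point that needs care is ruling out zigzag traversals that re-enter a subtree through multiple leaves; each such detour costs at least two additional edges (one to leave and one to re-enter $P$, on top of the tree path between the two leaves used), so it cannot occur in a cycle of length exactly $k$, since the canonical traversal already exhausts the budget. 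Apart from this minor obstacle, the whole argument is a routine edge count combined with the acyclicity of the two subtrees.
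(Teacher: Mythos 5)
Your proof is correct and fills in the detail behind what the paper dismisses as following ``directly from the construction'': a simple cycle through $u$ must use both of $u$'s incident edges, hence decompose into a $w$-$v$ path in $G_{x,y}\setminus\{u\}$, and the edge count $1+\frac{k-4}{2}+1+d_P(p_1,p_2)+1+\frac{k-4}{2}+1 = k + d_P(p_1,p_2)$ forces $p_1=p_2$ in any cycle of length exactly $k$; conversely a common element $p$ yields a $k$-cycle by the same count. Your added care about zigzag re-entries into the subtrees is the right thing to observe (each re-entry adds at least two edges, while the canonical traversal already sums to $k$), and the argument is sound.
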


Therefore, given an algorithm $ALG$ for $k$-cycle membership, if Alice and Bob can simulate $ALG$ on $u$ then they can solve $S$-Disjointness, where $S=(\Delta-1)^{\frac{k-4}{2}+1}$ and the size of the input stings is $$K=n-|T|\geq n-\left(1+2\sum_{i=0}^{\frac{k-4}{2}}(\Delta-1)^i\right)\geq n-\left(1+2\frac{(\Delta-1)^{\frac{k-4}{2}+1}-1}{\Delta-2}\right)$$
Observe that for
$$(\Delta-1)^{\frac{k-4}{2}+1}=O(n^{1-\epsilon})$$ for some constant $0<\epsilon< 1$, it holds that $K=\Theta(n)$, and $\log \binom{K}{S}=\Omega(\Delta^{\frac{k-4}{2}+1}\log(n))$. It remains to show that given an algorithm for $k$-cycle membership with messages of size $B$, Alice and Bob can simulate $ALG$ on $u$ by exchanging only $O(B)$ bits.
\begin{proofof}{Theorem~\ref{thm: k-cycle}}
	Let $ALG$ be a ($k/2-1$)-round algorithm for solving the $k$-cycle membership problem. Let $\{m^{1}_{v\rightarrow u},...,m^{k/2-1}_{v\rightarrow u}\}$ and $\{m^1_{w\rightarrow u},...,m^{k/2-1}_{w\rightarrow u}\}$ be the two sets of messages sent from $v$ and $w$ to $u$ during the $k/2-1$ rounds, where, e.g., $m^i_{v\rightarrow u}$  is the message sent from $v$ to $u$ in round $i$.
	
	The crucial observation is that Alice and Bob can simulate the nodes $v$, $w$, and $u$ during the first $k/2-2$ rounds without any communication, because the $k/2-2$ neighborhoods of these nodes are fixed. Therefore, in order for the players to compute the output of $u$ after $k/2-1$ rounds, it suffices for Alice to send to Bob the message $m^{k/2-1}_{v\rightarrow u}$, and for Bob to send to Alice the message $m^{k/2-1}_{w\rightarrow u}$.
	
	By Observation~\ref{ob: k-cycle}, this implies that $2B$ bits suffice for the players to correctly compute $DISJ_{K,S}$. Therefore, by the lower bounds on $DISJ_{K,S}$, any deterministic optimal-round algorithm for $k$-cycle membership requires messages of size $\Omega(\log\binom{K}{S})=\Omega(\Delta^{\frac{k-4}{2}+1}\log(n))$, and any randomized optimal-round algorithm which succeeds with high probability requires $\Omega(S)=\Omega(\Delta^{\frac{k-4}{2}+1})$, for any integers $n,\Delta,k$ such that $$(\Delta-1)^{\frac{k-4}{2}+1}=O(n^{1-\epsilon})$$ for some constant $0<\epsilon< 1$.
	
\end{proofof}

We mention that one can use the set-disjointness function instead of $DISJ_{K,S}$ to obtain a lower bound of $(\Delta-1)^{\frac{k-4}{2}}$, by simply connecting each leaf to a single node on the path, based on the inputs. However, this gives a rather strong bound on $\Delta$ with respect to $n$ because then the number of leaves must be also linear in $n$. Such a bound for $\Delta$ does not occur in our given construction.

\end{document}